\definecolor{myurlcolor}{rgb}{0,0,0.9}
\newcommand{\proj}[1]{| #1\rangle\!\langle #1 |}
\newcommand{\inner}[2]{\langle #1 , #2\rangle}
\DeclareMathOperator{\trace}{Tr}
\newcommand{\Ptr}[2]{\trace_{#1}\Pa{#2}}
\newcommand{\Tr}[1]{\Ptr{}{#1}}
\newcommand{\Pa}[1]{\left[#1\right]}
\theoremstyle{plain}
\newtheorem{thm}{Theorem}
\newtheorem{lem}[thm]{Lemma}
\newtheorem{prop}[thm]{Proposition}
\newtheorem{Def}[thm]{Definition}
\newtheorem{Rem}[thm]{Remark}
\newtheorem{Exam}[thm]{Example}
\newcommand*{\myproofname}{Proof}
\def\ot{\otimes}
\def\complex{\mathbb{C}}
\newcommand{\be}{\begin{equation}}
\newcommand{\ee}{\end{equation}}
\newcommand{\beq}{\begin{eqnarray}}
\newcommand{\eeq}{\end{eqnarray}}
\DeclareMathAlphabet{\mathcal}{OMS}{cmsy}{m}{n}
\begin{document}

\title{Magic Resource Can Enhance the Quantum Capacity of  Channels}

   \author{Kaifeng Bu}
  \email{bu.115@osu.edu}
   \affiliation{Department of Mathematics, The Ohio State University, Columbus, Ohio 43210, USA}
 \affiliation{Department of Physics, Harvard University, Cambridge, Massachusetts 02138, USA}

   \author{Arthur Jaffe}
  \email{Arthur\_Jaffe@harvard.edu}
 \affiliation{Department of Physics, Harvard University, Cambridge, Massachusetts 02138, USA}
 \affiliation{Department of Mathematics, Harvard University, Cambridge, Massachusetts 02138, USA}

\begin{abstract}
We investigate the role of magic resource in the quantum capacity of channels.
We consider the quantum channel of the recently proposed discrete beam splitter with the fixed environmental state. 
We find that if the fixed environmental state is a stabilizer state, then the quantum capacity is zero. Moreover, we 
find that the quantum capacity is nonzero for some magic states, and the quantum capacity increases linearly with 
respect to the number of single-qudit magic states in the environment. We also bound the maximal quantum capacity of the discrete beam splitter in terms of the amount of magic resource in the environmental states. These results suggest that magic resource can increase the quantum capacity of channels; it sheds new insight into the role of stabilizer  and magic states in quantum communication.
\end{abstract}

\maketitle

\section{Introduction}
Stabilizer states and circuits are basic concepts in 
discrete-variable (DV) quantum systems. They have applications ranging from use in quantum error correction codes, to understanding the possibility of a quantum computational advantage. 
 The importance of stabilizer states was recognized by Gottesman~\cite{Gottesman97} in his study of quantum error correction codes. Quantum error correction codes based on stabilizer states are 
 called stabilizer codes. Shor's  9-qubit-code~\cite{ShorPRA95} and Kitaev's toric code~\cite{Kitaev_toric} are two well-known examples of stabilizer codes.

A stabilizer vector is a common eigenstate of an abelian subgroup of the qubit Pauli group; such a vector defines a pure stabilizer state. Stabilizer circuits comprise Clifford unitaries acting on stabilizer inputs and measurements.  These circuits can be efficiently simulated on a classical computer, a result 
known as the Gottesman-Knill theorem~\cite{gottesman1998heisenberg}. Hence, non-stabilizer resources are necessary
to achieve a quantum computational advantage.

The property of not being a stabilizer has recently been called ``magic''~\cite{BravyiPRA05}. To quantify the amount of magic resource, several  measures have been proposed~\cite{Veitch12mag,Veitch14,BravyiPRL16,BravyiPRX16,HowardPRL17,bravyi2019simulation, Bu19,SeddonPRXQ21,Chen22,Bucomplexity22,BuPRA19_stat,WangNJP19,BuQST2023,BeverlandQST20,bu2022classical,LeonePRL22,WangPRA23}. These measures have been applied to the classical simulation of quantum ircuits~\cite{BravyiPRL16,BravyiPRX16,HowardPRL17,bravyi2019simulation, Bu19,SeddonPRXQ21}, to unitary  synthesis \cite{HowardPRL17,BeverlandQST20}, and to the generalization of capacity in quantum machine learning~\cite{BuPRA19_stat,BuQST2023}. 

One important measure proposed by Bravyi, Smith, and Smolin  is called  stabilizer rank~\cite{BravyiPRX16}. They used this measure to investigate time complexity in  classical simulation of quantum circuits; here  the simulation  algorithm  is based on a low-rank decomposition of the tensor products of magic states into stabilizer states.  %~\cite{BravyiPRX16}.

Pauli rank is defined as the number of nonzero coefficients in the decomposition in the Pauli basis.  This provides a lower bound on the stabilizer rank~\cite{Bu19}.  To achieve a quantum advantage for  DV quantum systems, several sampling tasks have been proposed~\cite{jozsa2014classical, koh2015further,bouland2018complexity,boixo2018characterizing,bouland2018onthecomplexity,bremner2010classical,Yoganathan19}.
Some of these proposals have been realized in experiments, which were used to claim a computational advantage over  classical supercomputers~\cite{Google19,zucongzi1,Pansci22}.

In earlier work, we introduced a discrete beam splitter unitary, enabling us to define a quantum convolution on DV quantum systems. We developed this into a framework to study DV quantum systems, including the discovery of a quantum central limit theorem that converges to a stabilizer state. 
This means that repeated quantum convolution with a given state  converges to a stabilizer state. In other words, stabilizer states can be identified as ``quantum-Gaussian'' states~\cite{BGJ23a,BGJ23b,BGJ23c,BJW24a,bu2024extremality,BGJ24a}. 
Here, we explore the channel capacity of the discrete beam splitter and investigate the role of magic resource in the channel capacity. 

% One key insight emerging from this point of view is that 
% stabilizer states are the only extremizers
% of some information theoretic inequalities~\cite{BGJ23a,BGJ23b}.
%  Furthermore, we have proposed a convolution-swap test, 
%  to determine whether a given state $\rho$ lies in the set of stabilizer states, or whether it is close to being a stabilizer state~\cite{BGJ23c}. 

In this work, we focus on the quantum capacity of a channel, which quantifies the maximal number of qubits, on average, that can be reliably transmitted.
In continuous-variable (CV) quantum  systems,  the quantum capacity of the 
beam splitter plays an important role in quantum communication. For example, in  optical communication one refers to the CV beam splitter with a thermal environment as a ``thermal attenuator channel.'' This can be generalized  to a general
 attenuator channel by choosing
non-thermal or non-Gaussian environmental states. 

There has been a surge of interest in exploring the quantum capacity of such channels~\cite{HolevoPRA01,caruso2006one,WolfPRL07,WildePRA12,pirandola2017fundamental,SabapathyPRA17,WildeIEEE18,Rosati2018narrow,JiangIEEE19,Jiang2020enhanced,wang2024passive,LamiPRL20,OskoueiIEEE22}.  This interest can be traced to the requirement of  applications in quantum information and computation, including
universal quantum computation~\cite{MenicucciPRL06,Ohliger10}, quantum error correction codes~\cite{CerfPRL09}, entanglement manipulation~\cite{Eisert02,FiurasekPRL02,GiedkePRA02,HoelscherPRA11,NamikiPRA14}, and  non-Gaussian resource theory~\cite{LamiPRA18,TakagiPRA18,LamiPRA20}.
Furthermore,  bosonic error-correcting codes, such as the Gottesman-Kitaev-Preskill code~\cite{GKPPRA01}, has been demonstrated to achieve  quantum capacity in these models up to a constant gap~\cite{JiangIEEE19}.

 A nice formula for quantum capacity based on regularized, coherent information has been obtained in the works of 
Lloyd~\cite{LloydPRA97}, Shor~\cite{shor2002quantum}, and Devetak~\cite{Devetak05}.
One surprising property of the quantum capacity is super-additivity~\cite{DiVincenzoPRA98,SmithScience08,Smith2011quantum,Cubitt2015unbounded}. This means that the quantum capacity of the tensor product of channels is larger that the sum of their individual quantum capacities, which implies
that entanglement can enhance the quantum capacity. 

In this work, we investigate the quantum capacity of DV beam splitter, with different choices of environmental states.
The results in this work reveal the role of magic states in the quantum capacity. We make a summary of the 
main results as follows:
\begin{enumerate}
\item{}
%(1) 
If the fixed environmental state is a convex combination of stabilizer states, then the
quantum capacity is zero for the discrete beam splitter with nontrivial parameters. This differs from the CV case,
where the presence of Gaussian states leads to a nonzero quantum capacity for some CV beam splitter with nontrivial parameters \cite{LamiPRL20}.

\item{}
%(2) 
We find that the quantum capacity is nonzero for some magic states, and the quantum capacity increases linearly with 
respect to the number of single-qudit magic states in the environment.  We also show that the maximal quantum capacity of the discrete beam splitter is bounded by the amount of magic resource in the environmental states. These results suggest that, in general, magic resource can increase the quantum capacity of channels as well as provide bounds on the maximal quantum capacity.

\item{}
%(3) 
We  show that environmental states, which are symmetric under the discrete phase space inversion operator, could also lead to zero quantum capacity. We  provide some magic state which satisfies this symmetry. This  shows that magic resource is necessary, but not sufficient, to increasing the quantum capacity in this model.
\end{enumerate}

\section{Basic Framework}
We focus on an $n$-qudit system with Hilbert space $\mathcal{H}^{\ot n}$. Here $\mathcal{H} \simeq \complex^d$ is  $d$-dimensional, and $d$ is a natural number. 
Let $D(\mathcal{H}^{\ot n})$ denote the set of  all quantum states on $\mathcal{H}^{\ot n}$.
We consider the orthonormal, computational basis in $\mathcal{H}$ denoted by $\set{\ket{k}}$, for ${k\in \mathbb{Z}_d}$. The Pauli $X$ and $Z$ operators
are 
\[ X: |k\rangle\mapsto |k+1 \rangle 
%\hskip -3pt{\mod d}
\;, \;\;\; Z: |k\rangle \mapsto\omega^k_d\,|k\rangle,\;\;\;\forall k\in \mathbb{Z}_d\;. 
\]
Here  $\mathbb{Z}_{d}$ is the cyclic group over $d$,  and $\omega_d=\exp(2\pi i /d)$ is a $d$-th root of unity. 
In order to define our quantum convolution, we assume $d$ is prime.

If  $d$ is an odd prime number,  the local Weyl operators (or generalized Pauli operators)
are defined as 
$
w(p,q)=\omega^{-2^{-1}pq}_d\, Z^pX^q\;.
$
Here $2^{-1}$ denotes the inverse $\frac{d+1}{2}$ of 2 in $\mathbb{Z}_d$.  
In the $n$-qudit system, the Weyl operators are defined as
$
w(\vec p, \vec q)
=w(p_1, q_1)\ot...\ot w(p_n, q_n),
$
with $\vec p=(p_1, p_2,..., p_n)\in \mathbb{Z}^n_d$, and $\vec q=(q_1,..., q_n)\in \mathbb{Z}^n_d $.  

Denote $V^n:=\mathbb{Z}^n_d\times \mathbb{Z}^n_d$; this represents the phase space for $n$-qudit systems, in analogy with continuum mechanics~\cite{Gross06}. The functions $w(\vec p, \vec q)$ on phase space form an orthonormal basis 
 with respect to the inner product 
$\inner{A}{B}=\frac{1}{d^n}\Tr{A^\dag B}$. 

The characteristic function $\Xi_{\rho}:V^{n}\to\complex$ of a quantum state $\rho$ is
\begin{eqnarray*}
\Xi_{\rho}(\vec{p},\vec q):=\Tr{\rho w(-\vec{p},-\vec q)}.
\end{eqnarray*}
Hence, 
any quantum state $\rho$ can be written as a linear combination of the Weyl operators 
$
\rho=\frac{1}{d^n}
\sum_{(\vec{p},\vec q)\in V^n}
\Xi_{\rho}(\vec{p},\vec q)w(\vec{p},\vec q)\;.
$
The transformation from the computational basis to the Pauli basis is the quantum Fourier transform that we consider. It has found extensive uses in a myriad of applications, including discrete Hudson theorem~\cite{Gross06}, quantum Boolean functions~\cite{montanaro2010quantum}, quantum circuit complexity~\cite{Bucomplexity22}, quantum scrambling~\cite{GBJPNAS23}, the generalization capacity 
 of quantum machine learning~\cite{BuPRA19_stat}, and
  quantum state 
 tomography~\cite{Bunpj22}.

Stabilizer states are an important family of quantum states; such a state is invariant under an abelian subgroup of the Pauli group. Specifically, a pure stabilizer vector $\ket{\psi}$ for $n$ qubits is the common eigenvector of a commuting  subgroup  with  $n$ generators, $ \set{g_i}_{i\in [n]}$, so that $g_i\ket{\psi}=\ket{\psi}$ for each $i$. 
The corresponding density matrix can be expressed 
as $\proj{\psi}=
\Pi^n_{i=1}\mathbb{E}_{k_i\in \mathbb{Z}_d}\, g^{k_i}_i$, where the expectation is defined as  $\mathbb{E}_{k_i\in \mathbb{Z}_d}\,g^{k_i}_i:=\frac{1}{d}\sum_{k_i\in \mathbb{Z}_d}\,g^{k_i}_i$.

A mixed state $\rho$ is a stabilizer state if there exists some abelian  subgroup of Pauli operators  with $r< n$ generators $ \set{g_i}_{i\in [r]}$ such that $\rho=\frac{1}{d^{n-r}}
\Pi^r_{i=1}\mathbb{E}_{k_i\in \mathbb{Z}_d}\, g^{k_i}_i$. We take STAB to denote the set of all  stabilizer states, which is also called the set of minimal stabilizer-projection states.  The set $STAB $ is the set of states which
are a convex combination of pure stabilizer states. One magic measure, which quantifies the amount of magic resource in quantum states, is called 
the  relative entropy of magic (See Definition 35 in~\cite{BGJ23c}), denoted as $MRM(\rho)$. This  is defined as 
\begin{align}\label{eq:MRM}
MRM(\rho):=\min_{\sigma\in STAB} D(\rho||\sigma)\;.
\end{align}
Here $D(\rho||\sigma)=\Tr{\rho\log\rho}-\Tr{\rho\log \sigma}$ is the quantum relative entropy. We also consider $MRM_{\infty}(\rho):=\min_{\sigma\in STAB} D_{\infty}(\rho||\sigma)$ with maximal relative entropy
$D_{\infty}(\rho||\sigma)=\min\set{\lambda:\rho\leq 2^{\lambda}\sigma}$~\cite{BGJ23c}.

Denote the vector $| \vec i\, \rangle = |  i_1 \rangle \otimes \cdots \otimes |  i_n \rangle \in \mathcal{H}^{\otimes n} $.  In order to define the discrete beam splitter, consider the tensor product Hilbert space of two $n$-qudit Hilbert spaces $\mathcal{H}_A\ot\mathcal{H}_B $.

{\bf Discrete Beam Splitter~\cite{BGJ23a,BGJ23b}:}
Given a prime $d$ and $s,t\in \mathbb{Z}_d$ satisfying  $s^2+t^2\equiv 1 \mod d$, the discrete beam splitter unitary $U_{s,t}$ for a $2n$-qudit system $\mathcal{H}_A\ot\mathcal{H}_B $ is
\begin{align}\label{1231shi1}
 U_{s,t} = \sum_{\vec i,\vec j\in \mathbb{Z}^n_d} |s\vec i+t\vec j \rangle \langle \vec i|_A \otimes |- t\vec i+s\vec j\rangle \langle \vec j|_B\;.
 \end{align}
The quantum channel $  \Lambda_{s,\sigma}$ with a fixed environmental state $\sigma$ is,
\begin{eqnarray}\label{eq:chan_DBS}
    \Lambda_{s,\sigma}(\rho):= \Ptr{B}{ U_{s,t} (\rho \otimes \sigma) U^\dag_{s,t}}\;.
\end{eqnarray}
The complementary channel is 
\begin{eqnarray}
    \Lambda^c_{s,\sigma}(\rho)=\Ptr{A}{ U_{s,t} (\rho \otimes \sigma) U^\dag_{s,t}}.
\end{eqnarray}

We denote $\frac{s^2 \mod d}{d}$ to be the discrete transmission rate, and $\frac{t^2 \mod d}{d}$ to be the discrete reflection rate. We summarize the properties of the discrete beam splitter  in Appendix~A.
In this work, we focus on the discrete beam splitter with nontrivial parameters, i.e., $s^2,t^2\not\equiv 0,1 \mod d$.

One motivation to consider the discrete beam splitter is to study the quantum additive noise model of qudit systems.  
One important model for noise in classical communication is additive noise. Additive noise is defined as $Y=X+Z$, where
$X, Z$ are independent random variables, and the probability distribution
of the output $Y$ is the classical convolution of $X$ and $Z$. In other words,
$P_Y(a)=\sum_xP_X(x)P_Z(a-x)$. Hence, it is natural to ask the question: what is the quantum additive noise 
channel on qudit or qubit systems? This is the motivation for us to consider the
 discrete beam splitter. We believe that this represents a good candidate for a quantum additive noise channel on a qudit system. One reason is that the discrete beam splitter provides a good quantum convolution, and it will reduce to the classical additive nosie channel  when the input states are diagonal  in the computational basis, i.e., the classical states.

\begin{Rem}
   In a qubit-system with $d=2$, there is no nontrivial choice of $s,t$ such that $s^2+t^2\equiv 1\mod 2$, since in that case  
$(s,t)$ could only be $(0,1)$ or $(1,0)$. Hence, it is impossible to consider the discrete beam splitter 
with two input states. We give an alternative in~\cite{BGJ23c}.
\end{Rem}

\section{Main results}
The quantum capacity of 
a channel $\Lambda$ can be written as a regularized form of the coherent information, as given by the Lloyd-Shor-Devetak theorem~\cite{LloydPRA97,shor2002quantum,Devetak05}:
\begin{eqnarray}
    Q(\Lambda)&:=&\lim_{N\to \infty}\frac{Q^{(1)}(\Lambda^{\ot N})}{N}\;,\\
    Q^{(1)}(\Lambda)&:=&\max_{\rho\in D(\mathcal{H}^{\ot n})}I_c(\rho,\Lambda)\;.
\end{eqnarray}
Here the coherent information is 
$I_c(\rho,\Lambda):=S(\Lambda(\rho))-S(I\ot\Lambda(\proj{\Psi}_{RA}))$ with the  purification$\Psi_{RA}$ of $\rho$ \footnote{This means $\Ptr{R}{\proj{\Psi_{RA}}}=\rho$.} .
It can also be written as  
$I_c(\rho,\Lambda)=S(\Lambda(\rho))-S(\Lambda^c(\rho))$, where $\Lambda^c$ is the complementary channel of $\Lambda$. In general, the optimization over all states in the asymptotic regime makes it 
difficult to calculate the quantum capacity.

Let us now consider the quantum capacity of the quantum channel $\Lambda_{s,\sigma}$ defined in \eqref{eq:chan_DBS} using  the discrete beam splitter
and the fixed environmental state $\sigma$. We start by considering $\sigma$ to be a convex combination of stabilizer states, in order to explore the role of stabilizers.

\begin{thm}[\bf Stabilizer environments yield zero quantum capacity of discrete beam splitters] \label{thm:main_1}
Let nontrivial $s,t\in \mathbb{Z}_d$ satisfy $s^2+t^2\equiv 1\mod d$, and  the environmental state
 $\sigma$  be a convex combination of stabilizer states. Then  
\begin{eqnarray}
    Q(\Lambda_{s,\sigma})=0.
\end{eqnarray}
\end{thm}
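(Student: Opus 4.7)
The plan is to show that whenever $\sigma$ is a convex combination of stabilizer states, the channel $\Lambda_{s,\sigma}$ is entanglement-breaking (EB), and then to invoke the standard fact that every EB channel has vanishing quantum capacity. The EB property is preserved under convex combinations of channels: the Choi state of $\sum_i p_i \Lambda_i$ is $\sum_i p_i J_{\Lambda_i}$, and a convex mixture of separable states is separable. Since $\Lambda_{s,\sigma}$ is manifestly linear in $\sigma$, decomposing $\sigma=\sum_i p_i \sigma_i$ into pure stabilizer components reduces the problem to the case of pure stabilizer $\sigma$.

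For pure stabilizer $\sigma$, the starting point is the characteristic-function identity for the discrete beam splitter channel that follows from the symplectic action of $U_{s,t}$ on Weyl operators (this is the DV quantum convolution identity developed in the authors' previous work). Up to an overall phase it reads
\begin{equation*}
\Xi_{\Lambda_{s,\sigma}(\rho)}(\vec a,\vec b)\;\propto\; \Xi_\rho(s\vec a,s\vec b)\,\Xi_\sigma(t\vec a,t\vec b).
\end{equation*}
A pure stabilizer $\sigma$ has $|\Xi_\sigma|$ equal to the indicator of an isotropic Lagrangian subgroup $L\subset V^n$ of size $d^n$. The nontriviality assumption on $(s,t)$ forces $s,t$ to be units in $\mathbb{Z}_d$, so multiplication by $t$ is an automorphism of $L$; hence the support of $\Xi_{\Lambda_{s,\sigma}(\rho)}$ is contained in $L$, and $\Lambda_{s,\sigma}(\rho)$ depends on $\rho$ only through the restriction $\Xi_\rho|_L$.

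The Weyl operators indexed by $L$ form a maximal abelian subgroup of the Pauli group with a joint stabilizer eigenbasis $\{|e_k\rangle\}_k$, and $\Xi_\rho|_L$ encodes (via the discrete Fourier transform over $L$) precisely the diagonal entries $\langle e_k|\rho|e_k\rangle$. Therefore
\begin{equation*}
\Lambda_{s,\sigma}(\rho)=\sum_k \langle e_k|\rho|e_k\rangle\,\Lambda_{s,\sigma}(|e_k\rangle\langle e_k|),
\end{equation*}
which is a measure-and-prepare channel and hence EB. The main obstacle is essentially bookkeeping: tracking the phases in the symplectic identity for $U_{s,t}$ and confirming the Lagrangian support structure of $\Xi_\sigma$ for pure stabilizers. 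Both are routine within the authors' DV convolution framework, after which the EB conclusion, and hence $Q(\Lambda_{s,\sigma})=0$, follows immediately (noting that EB is closed under tensor products, so the same vanishing persists in the regularized capacity).
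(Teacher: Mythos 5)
Your proof is correct, and at its core it establishes the same structural fact as the paper's: for a pure stabilizer environment the channel factors through full dephasing in the joint eigenbasis of the stabilizer group, i.e.\ $\Lambda_{s,\sigma}(\rho)=\sum_{k}\langle e_k|\rho|e_k\rangle\,\Lambda_{s,\sigma}(\proj{e_k})$. You get there via the support of the characteristic function (convolution--multiplication duality, $\Xi_\sigma$ being the unimodular indicator of a Lagrangian subgroup $L$, and $s,t$ being units so that $sL=tL=L$); the paper derives the identical identity $\Lambda_\sigma(\rho)=\Lambda_\sigma(\Delta(\rho))$ by averaging over the Weyl stabilizer group of $\sigma$ and using the covariance of $U_{s,t}$ under Weyl operators, and then computes $\Lambda_\sigma(\proj{\varphi_{\vec a}})=\proj{\varphi_{t\vec a}}$ explicitly. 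Where you genuinely diverge is in the two bookends. For the reduction to pure stabilizers, the paper proves convexity of coherent information from joint convexity of relative entropy (its Lemma 7), while you use convexity of the entanglement-breaking property at the level of Choi states; both are sound. For the conclusion, the paper computes $S(\Lambda_\sigma(\rho))=S(\vec p)$ and $S(I\ot\Lambda_\sigma(\proj{\Psi_{RA}}))=S(\vec p)+\sum_{\vec a}p_{\vec a}S(\tau^R_{\vec a})$ to get $I_c\leq 0$ directly, and handles the regularization by noting that $\sigma^{\ot N}$ is again a stabilizer state; you instead recognize the measure-and-prepare (Holevo) form, invoke the standard theorem that entanglement-breaking channels have vanishing quantum capacity, and obtain the regularized statement for free because the entanglement-breaking property is closed under tensor products. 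Your route is shorter, avoids any entropy computation, and yields the formally stronger conclusion that $\Lambda_{s,\sigma}$ is entanglement-breaking; the paper's explicit calculation has the advantage of exhibiting the output states $\proj{\varphi_{t\vec a}}$ concretely, which it reuses in the entanglement-fidelity analysis of its Appendix E.
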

We prove Theorem \ref{thm:main_1} in Appendix~B. This shows that stabilizer environmental states ensure that  the quantum capacity becomes zero for any nontrivial parameters $s,t$. This phenomenon differs from the CV case in the following way:  the CV beam splitter with a pure Gaussian state (e.g., the vacuum state) and with transmissivity $\lambda>1/2$ has a quantum capacity  strictly larger than zero~\cite{LamiPRL20}.

From Theorem \ref{thm:main_1}, we infer that a magic environmental state is necessary in order to obtain nonzero quantum capacity. Hence, let us consider the case in which the environmental state is a magic state. Let $\sigma_k$ be  a quantum state generated by a Clifford circuit $U_{cl}$ on $k$ copies of  1-qudit magic state $\ket{\text{magic}}$, namely 
\begin{eqnarray}\label{eq:state_form}
    \sigma_k=U_{\text{cl}}(\proj{\text{magic}}^{\ot k}\ot\proj{0}^{n-k})U^\dag_{\text{cl}}\;.
\end{eqnarray}

\begin{thm}[\bf Magic resource can enhance quantum capacity of discrete beam splitters]\label{thm:main2}
Given  nontrivial $s,t\in \mathbb{Z}_d$ with $s^2+t^2\equiv 1\mod d$, there exists 
some 1-qudit magic state $\ket{\text{magic}}$ and 
a universal constant $c>0$, independent of $d$ and $n$, 
such that 
\begin{eqnarray}
Q(\Lambda_{s,\sigma_k})\geq kc\;.
%\quad \text{for } c>0\;.
\end{eqnarray}
Here, the state $\sigma_k$ is given by \eqref{eq:state_form}.
\end{thm}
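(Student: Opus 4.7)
The plan is to reduce the $n$-qudit channel $\Lambda_{s,\sigma_k}$ to a tensor product of $n$ single-qudit channels whose environments consist of $k$ single-qudit magic states and $n-k$ stabilizer states, then invoke superadditivity of the quantum capacity and lower-bound the single-qudit magic-environment channel by a $d$-independent constant.

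First I would use a Clifford covariance property of the discrete beam splitter (summarized in Appendix~A): by comparing symplectic actions on the Pauli group and using $s^2+t^2\equiv 1\bmod d$, one verifies that for every Clifford $U_{cl}$ on $B$ there exists a Clifford $U_{cl}^{(A)}$ on $A$ with the same symplectic matrix, and a Pauli $P$ on $A\ot B$, such that $U_{s,t}^{(n)}(I_A\ot U_{cl})=(U_{cl}^{(A)}\ot U_{cl})\,U_{s,t}^{(n)}\,((U_{cl}^{(A)})^{-1}\ot I_B)\cdot P$. Substituting this into the definition of $\Lambda_{s,\sigma_k}$ and using that the partial trace over $B$ absorbs the trailing $U_{cl}$ on $B$, one obtains $\Lambda_{s,\sigma_k}=\mathcal{V}_1\circ\Lambda_{s,|\eta'\rangle\langle\eta'|}\circ\mathcal{V}_2$, where $\mathcal{V}_1,\mathcal{V}_2$ are unitary Clifford channels on $A$ and $|\eta'\rangle=P_B(|\text{magic}\rangle^{\ot k}\ot|0\rangle^{\ot(n-k)})$ is a Pauli translate of $|\eta\rangle$. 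Since unitary pre- and post-processing leaves the quantum capacity invariant, $Q(\Lambda_{s,\sigma_k})=Q(\Lambda_{s,|\eta'\rangle\langle\eta'|})$.

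Next, since $|\eta'\rangle$ is a product state and the beam splitter admits a tensor-product decomposition $U_{s,t}^{(n)}=\Pi^{-1}(U_{s,t}^{(1)})^{\ot n}\Pi$ in the paired ordering (with $\Pi$ a qudit permutation), the channel factorizes as $\Lambda_{s,|\eta'\rangle\langle\eta'|}\cong\bigotimes_{i=1}^{k} \Lambda_{s,P_{B,i}|\text{magic}\rangle\langle\text{magic}|P_{B,i}^\dag}^{(1)}\ot\bigotimes_{i=k+1}^{n} \Lambda_{s,|a_i\rangle\langle a_i|}^{(1)}$, where $|a_i\rangle=P_{B,i}|0\rangle$ is a computational basis vector. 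The same 1-qudit Clifford covariance yields $Q(\Lambda_{s,P|\text{magic}\rangle\langle\text{magic}|P^\dag}^{(1)})=Q(\Lambda_{s,|\text{magic}\rangle\langle\text{magic}|}^{(1)})$ for any single-qudit Pauli $P$, while Theorem~\ref{thm:main_1} gives $Q(\Lambda_{s,|a_i\rangle\langle a_i|}^{(1)})=0$. Superadditivity $Q(\Lambda_1\ot\Lambda_2)\geq Q(\Lambda_1)+Q(\Lambda_2)$ then yields $Q(\Lambda_{s,\sigma_k})\geq k\,Q(\Lambda_{s,|\text{magic}\rangle\langle\text{magic}|}^{(1)})$.

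Finally, the core remaining task is to exhibit one 1-qudit magic state $|\text{magic}\rangle$ and a mixed input $\rho_0$ with $I_c(\rho_0,\Lambda_{s,|\text{magic}\rangle\langle\text{magic}|}^{(1)})\geq c$ for a universal constant $c>0$. From the beam-splitter formula~\eqref{1231shi1} the Kraus operators of the single-qudit channel act as $K_j|i\rangle=\phi_{s^{-1}(j+ti)}|s^{-1}(i+tj)\rangle$ with $\phi_l=\langle l|\text{magic}\rangle$. I would take $|\text{magic}\rangle$ supported only on $\{|0\rangle,|1\rangle\}$ with $d$-independent amplitudes and $\rho_0$ also supported on the same two-dimensional subspace, so that the analysis reduces to an effective qubit-like channel whose coherent information admits a $d$-independent lower bound away from the stabilizer limit. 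The main obstacle is precisely this last step: controlling the leakage of the Kraus map $i\mapsto s^{-1}(i+tj)$ out of the chosen qubit subspace, so that both the output entropy and the complementary-channel entropy for $\rho_0$ can be tracked simultaneously and their difference certified to exceed a positive constant $c>0$ independent of $d$.
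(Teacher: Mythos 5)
Your reduction steps match the paper's: Clifford covariance of the beam splitter removes $U_{\text{cl}}$, the product environment factorizes the channel into single-qudit pieces, Theorem~\ref{thm:main_1} kills the $\ket{0}$ factors, and superadditivity gives $Q(\Lambda_{s,\sigma_k})\geq k\,Q(\Lambda_{s,\ket{\text{magic}}})$. (The paper gets this more directly from Lemma~\ref{lem:key_tech}.5, without your Pauli translate $P$, but that difference is cosmetic.)

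The genuine gap is the step you yourself flag as the ``main obstacle'': you never exhibit a magic state and input achieving $I_c\geq c>0$, and the specific restriction you propose --- input $\rho_0$ supported on the \emph{same} subspace $\{\ket{0},\ket{1}\}$ as the environment --- does not work in the natural way. Take $\rho_0=\tfrac12(\proj{0}+\proj{1})$ and $\ket{\text{magic}}=\tfrac{1}{\sqrt2}(\ket0+\ket1)$: the four branches $\ket{si+tj}_A\ket{-ti+sj}_B$ are pairwise orthogonal on \emph{both} sides whenever $s\not\equiv\pm t$, so $S(\tau_A)=S(\tau_B)=2$ and $I_c=0$. The paper's trick is to put the input on $\{\ket{0},\ket{t^{-1}s}\}$ instead: then the environment outputs are $\{0,s,-s,0\}$ --- two branches collide at $\ket{0}_B$ --- so $S(\tau_B)=\tfrac32$ while $S(\tau_A)=2$, giving $I_c=\tfrac12$ exactly, with no ``leakage'' estimates needed because everything stays in orthogonal computational basis states. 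You would also need to handle $s^2\equiv t^2\bmod d$ separately (the collision construction degenerates there); the paper does this with a biased state $\sqrt{2/5}\ket0+\sqrt{3/5}\ket{\pm1}$ and an explicit numerical evaluation giving $I_c\approx 0.0178$. Without these explicit constructions the positive universal constant $c$ is not established.
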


We prove Theorem \ref{thm:main2}  in Appendix~C; here we sketch the idea for a very special case. 
Choose the  1-qudit magic state for the discrete beam splitter to be 
\begin{eqnarray}\label{eq:magic_stat}
    \ket{\sigma}_B=\frac{1}{\sqrt{2}}(\ket{0}_B+\ket{1}_B)\;,
\end{eqnarray}
and let  the input state  be
$\rho_A=\frac{1}{2}(\proj{0}_A+\proj{t^{-1}s}_A)$.  Then
$$\Lambda_{s,\sigma}(\rho_A)=\frac{1}{4}(\proj{0}_A+\proj{t}_A+\proj{t^{-1}s^2}_A+\proj{t^{-1}}_A),$$ and 
$$\Lambda^c_{s,\sigma}(\rho_A)=\frac{1}{2}\proj{0}_B+\frac{1}{4}\proj{-s}_B+\frac{1}{4}\proj{s}_B\;,$$
for $s^2\not\equiv t^2\mod d$. Hence,  the coherent information can be written as the entropy difference 
$$I_c(\rho_A,\Lambda_{s,\sigma})=S(\Lambda_{s,\sigma}(\rho_A))-S(\Lambda^c_{s,\sigma}(\rho_A))=\frac{1}{2}\;.$$
Hence, the quantum capacity $Q(\Lambda_{s,\sigma})\geq \frac{1}{2}$, and $Q(\Lambda_{s,\sigma^{\ot k}\ot \proj{0}^{n-k}})\geq kQ(\Lambda_{s,\sigma})\geq \frac{k}{2}$.
If $s^2\equiv t^2\mod d$, we need some additional arguments.

In addition, we find that  the maximal quantum capacity of $\Lambda_{\sigma}$
 by the amount of  magic of $\sigma$.

\begin{thm}[\bf Magic bound on quantum capacity for discrete beam splitters]\label{thm:main_bound}
    Given  nontrivial $s,t\in \mathbb{Z}_d$ with $s^2+t^2\equiv 1\mod d$, we have
    \begin{align}\label{eq:main_bound}
        Q(\Lambda_{\sigma})
        \leq MRM(\sigma),
    \end{align}
    where $ MRM(\sigma)$ is a magic measure defined in \eqref{eq:MRM}
\end{thm}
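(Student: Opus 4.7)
The plan is to reduce the theorem to a single-letter inequality and then regularize. Two structural facts enable the reduction: first, $\Lambda_\sigma^{\otimes N}$ coincides, up to a permutation of tensor factors, with the discrete beam splitter on $Nn$ qudits with the same parameters $(s,t)$ and environmental state $\sigma^{\otimes N}$, so $\Lambda_\sigma^{\otimes N}=\Lambda_{\sigma^{\otimes N}}$; second, $MRM$ is subadditive under tensor products, $MRM(\sigma^{\otimes N})\leq N\cdot MRM(\sigma)$, as witnessed by $\tau^{\otimes N}\in STAB$ as a candidate in the minimization over the larger system. Consequently, the theorem follows from the single-letter bound
$$Q^{(1)}(\Lambda_\sigma)\leq MRM(\sigma)$$
applied to $\sigma^{\otimes N}$, divided by $N$, and sent to the limit.

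To prove the single-letter bound, I fix $\tau\in STAB$ and aim for the perturbation inequality
$$I_c(\rho,\Lambda_\sigma)\leq I_c(\rho,\Lambda_\tau)+D(\sigma||\tau)$$
for every input $\rho$. Theorem~\ref{thm:main_1} gives $Q(\Lambda_\tau)=0$, so $I_c(\rho,\Lambda_\tau)\leq Q^{(1)}(\Lambda_\tau)\leq 0$, and the perturbation inequality collapses to $I_c(\rho,\Lambda_\sigma)\leq D(\sigma||\tau)$; maximizing over $\rho$ and minimizing over $\tau\in STAB$ then delivers the single-letter bound. The natural starting point is the data-processing inequality
$$D((I_R\otimes\Lambda_\sigma)(\phi_\rho)\,||\,(I_R\otimes\Lambda_\tau)(\phi_\rho))\leq D(\sigma||\tau),$$
valid because the map $\omega\mapsto(I_R\otimes\Lambda_\omega)(\phi_\rho)$ is CPTP on the environment slot (tensor with the fixed purification $\phi_\rho$ of $\rho$, apply $U_{s,t}$ on $AB$, trace out $B$). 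One then tries to convert this joint-state estimate into an entropy-difference bound via the identity $I_c(\rho,\Lambda)=-S(A|R)_{(I_R\otimes\Lambda)(\phi_\rho)}$, using that $S(R)=S(\rho)$ is independent of the environment.

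The main obstacle is precisely this last conversion, since the naive step $S(\omega^\tau_{RA})-S(\omega^\sigma_{RA})\leq D(\omega^\sigma_{RA}||\omega^\tau_{RA})$ fails in general. My preferred route is to combine the above data-processing inequality on $RA$ with its complementary-output counterpart $D(\tilde\Lambda^c_\sigma(\rho)\,||\,\tilde\Lambda^c_\tau(\rho))\leq D(\sigma||\tau)$, obtained by purifying $\sigma$ and $\tau$ on a common auxiliary system via Uhlmann's theorem, and then close the gap with a chain-rule manipulation that exploits the purity identity $S(\omega^\sigma_{RA})=S(\tilde\Lambda^c_\sigma(\rho))$. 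A parallel backup route, invoking monotonicity of $MRM$ under the stabilizer (Clifford) operations that realize $\Lambda_\sigma$, is to bound $MRM(J(\Lambda_\sigma))\leq MRM(\sigma)$ by data processing on the Clifford beam-splitter unitary and then argue $Q(\Lambda_\sigma)\leq MRM(J(\Lambda_\sigma))$ in analogy with the relative-entropy-of-entanglement bound $Q\leq E_R$ for the Choi state $J(\Lambda_\sigma)$.
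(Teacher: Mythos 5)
Your reduction to a single-letter statement is fine (indeed $\Lambda_{s,\sigma}^{\otimes N}=\Lambda_{s,\sigma^{\otimes N}}$ and $MRM$ is additive here because $\mathcal{M}(\sigma^{\otimes N})=\mathcal{M}(\sigma)^{\otimes N}$), but the single-letter bound itself is never actually proved. The inequality you need, $I_c(\rho,\Lambda_\sigma)\leq I_c(\rho,\Lambda_\tau)+D(\sigma\|\tau)$, does not follow from the data-processing inequality $D\bigl((I\otimes\Lambda_\sigma)(\phi_\rho)\,\big\|\,(I\otimes\Lambda_\tau)(\phi_\rho)\bigr)\leq D(\sigma\|\tau)$: as you yourself note, a relative entropy between two states does not control the difference of their (conditional) entropies, and writing $-S(A|R)_{\omega}=D(\omega_{RA}\|\rho_R\otimes I/d_A)-\log d_A$ reduces the claim to comparing two relative entropies to a common reference, which is again not dominated by $D(\omega^\sigma\|\omega^\tau)$. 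Your ``preferred route'' (combining the $RA$ bound with a complementary-output bound and an unspecified chain-rule manipulation) and your backup route (asserting $Q(\Lambda)\leq MRM(J(\Lambda))$ ``in analogy with'' $Q\leq E_R$) are both gestures rather than arguments: the $Q\leq E_R$ bound requires Choi-simulability and asymptotic-continuity machinery that you neither verify for $\Lambda_\sigma$ nor adapt to the magic setting. So the central step of the proof is missing.

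The paper closes exactly this gap by a structural identity you do not use: for this channel family, $MRM(\sigma)=S(\mathcal{M}(\sigma))-S(\sigma)$, where $\mathcal{M}(\sigma)$ is the mean (stabilizer) state, so the target becomes the entropy-difference inequality $S(\rho\boxtimes\sigma)-S(\rho\,\tilde{\boxtimes}\,\sigma)\leq S(\mathcal{M}(\sigma))-S(\sigma)$. This is then proved by conjugating with a Clifford unitary to put $\sigma$ in the normal form $\proj{0}^{\otimes r}\otimes\sigma_E$, decomposing both convolutions blockwise over the stabilized subsystem, and invoking the dimension bound $S(\rho^E_{\vec x}\boxtimes\sigma_E)\leq\log d^{|E|}$ together with the convolutional entropy inequality $S(\rho\,\tilde{\boxtimes}\,\sigma)\geq S(\sigma)$. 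Without the identity $MRM(\sigma)=S(\mathcal{M}(\sigma))-S(\sigma)$ (or some substitute for it), your approach has no way to convert the relative-entropy distance to STAB into a bound on a coherent-information difference.
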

We prove Theorem \ref{thm:main_bound} in  Appendix C of the supplementary material.  This bound indicates that if one wishes to achieve higher quantum capacity,  the magic resource of the environmental state $\sigma$ should be large enough. This result puts a fundamental limit on the maximal 
quantum capacity. One interesting question is to find what  optimal magic states achieve the maximal quantum capacity. By Theorem~\ref{thm:main_bound}, 
the states  that are the  extremalizers of the inequality \eqref{eq:main_bound} are the optimal magic states. Hence, identifying these optimal magic states involves understanding the states that maximize the bound established in the theorem.

 Here, let us consider an example that 
achieves the upper bound up to a constant factor. 
Let us now consider environmental states $\sigma_k$ of the form \eqref{eq:state_form}, along with a  
 1-qudit magic state of the form \eqref{eq:magic_stat}. Then the magic measure of $\sigma_k$ is
 $MRM(\sigma_k)=k\log d$,where $d$ is the local dimension of the qudit; this is a fixed constant.  Hence,  by Theorems~\ref{thm:main2} and ~\ref{thm:main_bound}, 
$Q(\Lambda_{\sigma_k})=\Theta(MRM(\sigma_k))$, where
 $f=\Theta(g)$ means there exist constants $ c_1, c_2$ such that $c_1g\leq f\leq c_2g$. This provides an example 
 that achieves the maximal quantum capacity satisfying the equality in \eqref{eq:main_bound}, 
 up to some constant factor.

Let us consider the discrete phase space point operators and their corresponding symmetries. The discrete phase space point operator $A(\vec{p}, \vec q)$ with $(\vec{p}, \vec q)\in V^n$ is defined as $A(\vec p, \vec q)=w(\vec p, \vec q)A w(\vec p, \vec q)^\dag$, where  $A = \frac{1}{d^n}\sum_{(\vec{u}, \vec v)\in V^n}w(\vec{u}, \vec v)$.  These operators can be used to define the discrete Wigner function, where the nongativity of the discrete Wigner function is used to characterize the stabilizer states on qudit systems~\cite{Gross06}.
The operator $A$ can be rewritten as $A=\sum_{\vec{x}}\ket{-\vec{x}}\bra{\vec{x}}$.
In Appendix~A, we list some properties of these discrete phase-space point operators for completeness.

Now let us consider  symmetry under the discrete phase space point operators.
A quantum state $\sigma$ is defined to be symmetric under the discrete phase space inverse operation if
\begin{eqnarray}\label{eq:sym_ps}
    A\sigma A^\dag=\sigma.
\end{eqnarray}
For example, the zero-mean stabilizer states\footnote{A zero-mean stabilizer state is a stabilizer state with the characteristic function taking values either 0 or 1~\cite{BGJ23a,BGJ23b}.} exhibit this symmetry as the characteristic function of the zero-mean stabilizer states is either 0 or 1.

 We now demonstrate that choosing  symmetric states as environmental states will lead to zero quantum capacity, even though these states could be magic states.  

\begin{thm}[\bf Symmetry can  limit 
 quantum capacity of balanced beam splitters]\label{thm:main3}
     Let $\sigma$ be an $n$-qudit state, which is a convex combination of states $w(\vec a)\sigma_{\vec a} \,w(\vec a)^\dag$, with each state $\sigma_{\vec a}$ having discrete, phase-space, inverse symmetry.
    Then $\Lambda_{s,\sigma}$ is anti-degradable\footnote{A channel $\Lambda$ is called anti-degradable if there exists a CPTP map $\Gamma$ such that 
    $\Lambda=\Gamma\circ\Lambda^c$. Similarly, a channel $\Lambda$ is called degradable if there exists a CPTP map $\Gamma$ such that 
    $\Lambda^c=\Gamma\circ\Lambda$.
} for $s\equiv t\mod d$,  and the quantum capacity
    \begin{eqnarray}
        Q(\Lambda_{s,\sigma})=0\;.
    \end{eqnarray}
\end{thm}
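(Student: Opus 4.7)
The plan is to produce a CPTP map $\Gamma$ satisfying $\Lambda_{s,\sigma}=\Gamma\circ\Lambda^c_{s,\sigma}$, from which $Q(\Lambda_{s,\sigma})=0$ follows by the standard no-cloning argument for anti-degradable channels. The argument proceeds in three stages: first reducing to a single environmental state with $A\sigma'A=\sigma'$ using Weyl covariance, then establishing the key identity $\Lambda_{s,\sigma'}(\rho)=A\,\Lambda^c_{s,\sigma'}(\rho)\,A$ for that case, and finally extending to convex combinations through a flagged Stinespring dilation.

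For the reduction, I would use the Weyl covariance of the beam splitter: a direct calculation on basis states gives $U_{s,t}(I_A\otimes w(\vec{a})_B)U_{s,t}^\dag = w(t\vec{a})_A\otimes w(s\vec{a})_B$ up to a global phase. This implies $\Lambda_{s,w(\vec{a})\sigma_{\vec{a}}w(\vec{a})^\dag}(\rho)=w(t\vec{a})\,\Lambda_{s,\sigma_{\vec{a}}}(\rho)\,w(t\vec{a})^\dag$, and analogously for $\Lambda^c$ with the shift $w(s\vec{a})$. Hence anti-degradability of $\Lambda_{s,\sigma_{\vec{a}}}$ transfers to $\Lambda_{s,w(\vec{a})\sigma_{\vec{a}}w(\vec{a})^\dag}$ after conjugating the degradation map by the appropriate Weyl unitaries.

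For the core case, assume $A\sigma'A=\sigma'$ and $s\equiv t\mod d$. I would first establish two intertwining identities for the balanced beam splitter by direct computation on computational-basis vectors,
\begin{equation*}
U_{s,s}(I\otimes A)=(I\otimes A)\,U_{s,-s}\;,\qquad \mathrm{SWAP}\cdot U_{s,s}=(A\otimes I)\,U_{s,-s}\;.
\end{equation*}
Inserting $\sigma'=A\sigma'A$ into $\Lambda_{s,\sigma'}(\rho)=\Ptr{B}{U_{s,s}(\rho\otimes\sigma')U_{s,s}^\dag}$ and using the first identity together with the cyclicity $\Ptr{B}{(I\otimes A)X(I\otimes A)}=\Ptr{B}{X}$ yields $\Lambda_{s,\sigma'}(\rho)=\Ptr{B}{U_{s,-s}(\rho\otimes\sigma')U_{s,-s}^\dag}$. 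Conjugating by $\mathrm{SWAP}$ using the second identity, and again applying the cyclicity $\Ptr{A}{(A\otimes I)X(A\otimes I)}=\Ptr{A}{X}$, shows that the A-output of $U_{s,s}$ equals the B-output of $U_{s,-s}$. Re-running the substitution $\sigma'=A\sigma'A$ inside $\Lambda^c_{s,\sigma'}$---where the outer $A$ on the B-register is no longer absorbed by the partial trace over A---produces two extra factors of $A$ sandwiching the output, giving the desired identity $\Lambda_{s,\sigma'}(\rho)=A\,\Lambda^c_{s,\sigma'}(\rho)\,A$. The map $\Gamma(\cdot)=A(\cdot)A$ is a unitary conjugation, hence CPTP, so $\Lambda_{s,\sigma'}$ is anti-degradable.

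For the convex combination $\sigma=\sum_{\vec{a}}p_{\vec{a}}\,w(\vec{a})\sigma_{\vec{a}}w(\vec{a})^\dag$, I would augment the Stinespring dilation of $\Lambda_{s,\sigma}$ with a classical flag register $F$ storing the index $\vec{a}$. The flagged channel $\tilde\Lambda(\rho)=\sum_{\vec{a}}p_{\vec{a}}\,\Lambda_{s,w(\vec{a})\sigma_{\vec{a}}w(\vec{a})^\dag}(\rho)\otimes|\vec{a}\rangle\langle\vec{a}|_F$ has a complementary that carries a copy of $F$, so a CPTP map reading $\vec{a}$ from the flag and applying the corresponding unitary $\Gamma_{\vec{a}}$ obtained from the Weyl-covariance reduction anti-degrades $\tilde\Lambda$. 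Since $\Lambda_{s,\sigma}(\rho)=\Ptr{F}{\tilde\Lambda(\rho)}$, data processing gives $Q(\Lambda_{s,\sigma})\leq Q(\tilde\Lambda)=0$. The main obstacle is the careful interplay between the two intertwining identities and the partial traces: the identities for $U_{s,s}$ and $U_{s,-s}$ must be combined so that the $A$-symmetry of $\sigma'$ yields a relation between $\Lambda_{s,\sigma'}$ and its own complementary, rather than a relation involving channels with different beam splitter parameters.
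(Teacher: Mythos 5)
Your proposal is correct; I checked the two intertwining identities on computational basis vectors (using $U_{s,\pm s}\ket{\vec i}\ket{\vec j}=\ket{s\vec i\pm s\vec j}\ket{\mp s\vec i+s\vec j}$ and $A=\sum_{\vec x}\ket{-\vec x}\bra{\vec x}$, with $A=A^\dag$, $A^2=I$), and the bookkeeping with the partial traces does deliver $\Lambda^c_{s,\sigma'}=\mathcal{A}\circ\Lambda_{s,\sigma'}$, hence $\Lambda_{s,\sigma'}=\mathcal{A}\circ\Lambda^c_{s,\sigma'}$, for $A$-symmetric $\sigma'$ in the balanced case. The overall architecture matches the paper's: reduce to a symmetric $\sigma_0$ by Weyl covariance of $U_{s,t}$, relate the channel to its complement by conjugation with $A$, and conclude anti-degradability. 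The difference is in how the key relation is established: the paper proves the general identity $\Lambda^c_{s,\sigma}=\mathcal{A}\circ\Lambda_{t,\mathcal{A}(\sigma)}$ in one line via the convolution--multiplication duality for characteristic functions, whereas you work directly with operator identities for $U_{s,s}$, $U_{s,-s}$, $A$ and $\mathrm{SWAP}$. Your route is more elementary and self-contained (it does not invoke the Fourier-analytic machinery), at the cost of juggling two beam-splitter parameters and being confined to $s\equiv t$; the paper's identity is cleaner and holds for all $s,t$, which makes transparent why the argument only closes when $s\equiv t$. You are also more careful than the paper about the convex combination: the paper dismisses it in one sentence, while your flagged-channel argument gives $Q(\Lambda_{s,\sigma})\le Q(\tilde\Lambda)=0$ cleanly. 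One small remark: as written, the flag construction proves the capacity statement but not literally that $\Lambda_{s,\sigma}$ itself is anti-degradable; for that you should instead place the flag in the environment of a Stinespring dilation (i.e., invoke convexity of the set of anti-degradable channels), which is a standard and equally short argument.
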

The proof of Theorem  \ref{thm:main3} is presented in Appendix~D. Combined with the Theorem  \ref{thm:main_1}, \ref{thm:main2}, and \ref{thm:main3},
we conclude that magic resource is necessary but not sufficient to increase the quantum capacity of the quantum channel defined by the
discrete beam splitter.  

Note that, 
several results show the extremality of stabilizer states in channel capacity,
such as the classical capacity. 
The Holevo capacity is an important quantity that provides a least upper bound on the
classical capacity; this is known as the Holevo-Schumacher-Westmoreland theorem~\cite{HolevoIEEE98,SchumacherPRA97}. 
It is known that  stabilizer states are the only extremizers of the Holevo capacity given by the discrete beam splitter, that is,
the quantum channel $\Lambda_{\sigma}$ in \eqref{eq:chan_DBS} achieves its maximal Holevo capacity  $\sigma$, if and only if  $\sigma$ is a pure stabilizer state. (See Theorem 19 in \cite{BGJ23a} and Theorem 73 in \cite{BGJ23b} for the details.) 

However, this is not the case for quantum capacity as shown in Theorem~\ref{thm:main3}. Here, we  give  an example of a magic state with zero quantum capacity:

\begin{Exam}
Consider the  1-qudit state 
\begin{eqnarray}
    \ket{\sigma}_B=\frac{1}{\sqrt{2}}(\ket{1 \mod d}_B+\ket{-1 \mod d}_B)\;.
\end{eqnarray}
Since the local dimension $d$ is an odd prime number, 
$\ket{\sigma}_B$ is a magic state. This state is also symmetric under the phase inverse operation in \eqref{eq:sym_ps}.
Hence from Theorem~\ref{thm:main3} we infer that the quantum capacity of $\Lambda_{\sigma}$ is zero for  $s\equiv t \mod d$.

\end{Exam}

It is  interesting to  identify all the  magic states which can play a beneficial
role in increasing the quantum capacity. 
 Theorem~\ref{thm:main3}
holds for the 
balanced discrete beam splitter, but  may not hold for other cases. So, identifying  magic states that increase the quantum capacity depends on the 
parameters of the discrete beam splitter. Moreover,
based on Theorem \ref{thm:main_bound},   one candidate is the family of magic states saturating the equality  in \eqref{eq:main_bound} (up to some constant factor), i.e., 
the magic state $\sigma$ with $Q(\Lambda_{\sigma})=\Theta(MRM(\sigma))$.
This can be achieved by the
 example we discuss after Theorem \ref{thm:main_bound}, which may lead to 
 finding other examples.

Finally, we briefly discuss the 
    connection between our work and quantum error correction code by
reinterpreting the results in terms of quantum coding. 
The detailed derivation of the  following results are
 presented in Appendix E. 
We explore how entanglement fidelity varies with different environmental states $\sigma$; we construct some encoding to show there exist magic states 
that will increase the entanglement fidelity. We also provide an upper bound on the advantage of entanglement fidelity in terms of the amount of magic resource.

 Consider the encoding  $\mathcal{E}_K:\mathcal{H}_S\to \mathcal{H}_A$ and the decoding $\mathcal{D}_K:\mathcal{H}_A\to \mathcal{H}_S$, where $\mathcal{H}_S=\complex^K$ is the logical space with dimension $K$ and  $\mathcal{H}_A=(\complex^d)^{\ot n}$ is the physical space.   
Entanglement fidelity, crucial in quantifying the performance of the 
quantum error correction code,  is given for discrete beam splitter $\Lambda_{\sigma}$ by
$F_e(\mathcal{E}_K, \mathcal{D}_K, \sigma)
    =\bra{\Phi}\mathcal{D}_K\circ\Lambda_{\sigma}\circ\mathcal{E}_K(\proj{\Phi})\ket{\Phi}
$, where $\ket{\Phi}=\frac{1}{\sqrt{K}}\sum_{i\in K}\ket{i}_R\ket{i}_S$ being maximally entangled state on $\mathcal{H}_R\ot\mathcal{H}_S$.

We first consider the maximal entanglement fidelity over all stabilizer states, i.e., $\max_{\tau\in STAB}\max_{\mathcal{E}_K} \max_{\mathcal{D}_K}   F_e(\mathcal{E}_K, \mathcal{D}_K, \tau)$, which quantifies the optimal performance of entanglement 
fidelity over all stabilizer states. We find that
\begin{align}
    \max_{\tau\in STAB}\max_{\mathcal{E}_K} \max_{\mathcal{D}_K}   F_e(\mathcal{E}_K, \mathcal{D}_K, \tau)
    =\frac{1}{K}.
\end{align}
Hence this quantity is usually very small $\leq 1/2$, as the logical dimension $K\geq 2$.

Moreover, let us consider the environmental state $\ket{\sigma}_B=\frac{1}{\sqrt{2}}(\ket{0}_B+\ket{t}_B)$, which is magic in qudit-system. 
For $K=2$, i.e., $\mathcal{H}_S$ is a logical qubit, let us consider the encoding
$\mathcal{E}_2$ as follows
\begin{align}
      \mathcal{E}_2:  \ket{0}_S\to \ket{0}_A, \quad
    \ket{1}_S\to \ket{s}_A.
\end{align}
where $\set{\ket{0}_S,\ket{1}_S}$ is an orthonormal basis of $\mathcal{H}_S$.
 Then there exists a decoding $\mathcal{D}_2$ such that 
$F_e(\mathcal{E}_2, \mathcal{D}_2, \sigma_B)=\frac{3}{4}>\frac{1}{2}=\max_{\tau\in STAB}\max_{\mathcal{E}_2} \max_{\mathcal{D}_2}   F_e(\mathcal{E}_2, \mathcal{D}_2, \tau)$.

Furthermore,  we find that the advantage of  magic states on the performance of 
entanglement fidelity compared to stabilizer states is bounded by the amount of magic resource as follows
\begin{align}
    \frac{\max_{\mathcal{E}_K} \max_{\mathcal{D}_K}   F_e(\mathcal{E}_K, \mathcal{D}_K, \sigma)}{\max_{\tau\in STAB}\max_{\mathcal{E}_K} \max_{\mathcal{D}_K}   F_e(\mathcal{E}_K, \mathcal{D}_K, \tau)}
    \leq 2^{MRM_{\infty}(\sigma)}.
\end{align}
This is based on 
 the convexity of $F_e$ and also the definition of magic measure 
$MRM_{\infty}(\rho)=\min_{\sigma\in STAB}D_{\infty}(\rho||\sigma)$.

\section{Conclusion and future work}
In this letter, we provide new understanding of stabilizer and magic states in quantum communication. We show  that magic resource can increase the quantum channel
capacity in the model defined by the discrete beam splitter, and the maximal quantum capacity is bounded by the amount of magic resource in the environmental state.

One intriguing problem for further study is the full characterization of magic states saturating the equality 
in \eqref{eq:main_bound}, up to some constant factor. This will help us to have a better understanding on the structure of 
magic states that can achieve maximal quantum capacity and increase quantum capacity.

Moreover, it is natural for future work to consider
different  quantum channel capacities of the discrete 
beam splitter, such as private capacity~\cite{LiPRL09,Junge18}.  Besides,
in qubit systems,  it is impossible to consider the discrete beam splitter for  two input states and with nontrivial parameters $s,t$; so it would  be interesting 
to consider the channel capacities of  $n$-qubit channels to show the power of
magic states.

In addition, we can also generalize the results in this work to non-Clifford operations. In that case we need consider  magic resource, not only in the environmental state, but also in the non-Clifford operation. 
The quantum capacity may not only depend on the magic resource of the environmental state, but also the magic resource of the non-Clifford operation. This will be an interesting problem to investigate in a future work.

Furthermore, we have briefly discussed the connection between the results and its connection 
with quantum error correction code. 
To find further potential application in quantum communication is one intriguing direction. The study of the Gottesman-Kitaev-Preskill code in the  CV beam splitter~\cite{JiangIEEE19,wang2024passive}
may be helpful for investigating this question. We leave this questions for further exploration.

\section{Acknowledgements}
We thank Liyuan Chen, Roy Garcia, Weichen Gu, Liang Jiang, Marius Junge, Bikun Li, Zhaoyou Wang, Zixia Wei, and Chen Zhao  for discussions.
This work was supported in part by the ARO Grant W911NF-19-1-0302 and the ARO
MURI Grant W911NF-20-1-0082.

\bibliography{reference}{}

\clearpage
\newpage
\onecolumngrid
\begin{appendix}
\section{Properies of the discrete Beam splitter}\label{appen:prop_dbs}
Unless noted otherwise, here we summarize some results in~\cite{BGJ23a,BGJ23b}. For simplicity denote  $\Xi_{\rho}(\vec x)$ with $\vec x=(\vec p, \vec q)\in V^n$.

\begin{prop}[Proposition 35 in \cite{BGJ23b}]
    For any  $\vec a, \vec b\in V^n$, the discrete beam splitter $U_{s,t}$ satisfies 
    \begin{eqnarray}
        U_{s,t}(w(\vec a)\ot w(\vec b))U^\dag_{s,t}
        =w(s\vec a+t\vec b)\ot w(-t\vec a+s\vec b)\;.
    \end{eqnarray}
    \end{prop}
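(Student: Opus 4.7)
The plan is to verify the identity by direct computation on the computational basis, since both $U_{s,t}$ and the tensored Weyl operators have simple explicit actions there. I first note that the definition of $U_{s,t}$ in \eqref{1231shi1} factors as a tensor product over the $n$ qudit pairs (the $k$-th qudit of $A$ paired with the $k$-th qudit of $B$), because the componentwise sum $s\vec i + t\vec j$ splits across slots. Similarly, $w(\vec a)\otimes w(\vec b)$ is a tensor product across these pairs. Hence it suffices to prove the $n=1$ statement and then invoke the tensor structure; alternatively, one can carry out the $n$-qudit computation in one go, as the arithmetic is identical and uses inner products of vectors in $\mathbb{Z}_d^n$ in place of products of scalars.

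Next, I would record two elementary facts. First, the condition $s^2 + t^2 \equiv 1 \bmod d$ makes the linear map $(\vec i,\vec j)\mapsto (s\vec i+t\vec j,\,-t\vec i+s\vec j)$ a bijection on $\mathbb{Z}_d^n\times \mathbb{Z}_d^n$, with inverse $(\vec i',\vec j')\mapsto (s\vec i'-t\vec j',\,t\vec i'+s\vec j')$, so $U_{s,t}$ is a permutation matrix (hence unitary) and $U_{s,t}^\dagger|\vec i'\rangle|\vec j'\rangle = |s\vec i'-t\vec j'\rangle|t\vec i'+s\vec j'\rangle$. Second, from $w(p,q)=\omega_d^{-2^{-1}pq}Z^pX^q$ one gets $w(\vec p,\vec q)|\vec k\rangle = \omega_d^{\vec p\cdot(\vec k + 2^{-1}\vec q)}\,|\vec k+\vec q\rangle$.

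With these in hand, the core computation is to evaluate $U_{s,t}\bigl(w(\vec a)\otimes w(\vec b)\bigr)U_{s,t}^\dagger$ on a generic basis vector $|\vec i'\rangle|\vec j'\rangle$, writing $\vec a=(\vec p_a,\vec q_a)$ and $\vec b=(\vec p_b,\vec q_b)$, and then compare with $w(s\vec a+t\vec b)\otimes w(-t\vec a+s\vec b)$ applied to the same vector. On the left, one obtains the basis ket $|\vec i'+s\vec q_a+t\vec q_b\rangle\otimes|\vec j'-t\vec q_a+s\vec q_b\rangle$ with phase $\vec p_a\cdot(\vec i + 2^{-1}\vec q_a)+\vec p_b\cdot(\vec j + 2^{-1}\vec q_b)$, where $\vec i = s\vec i'-t\vec j'$, $\vec j = t\vec i'+s\vec j'$. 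On the right one obtains the same basis ket, with phase $(s\vec p_a+t\vec p_b)\cdot(\vec i' + 2^{-1}(s\vec q_a+t\vec q_b)) + (-t\vec p_a+s\vec p_b)\cdot(\vec j'+2^{-1}(-t\vec q_a+s\vec q_b))$. The basis parts agree trivially.

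The essential step, and what I expect to be the bookkeeping bottleneck rather than a conceptual obstacle, is matching the two phase factors. Expanding both expressions, the linear-in-$\vec i',\vec j'$ terms already agree by grouping coefficients. The remaining $\vec i',\vec j'$-independent piece on the right, after factoring $2^{-1}$, becomes
\begin{equation*}
(s\vec p_a+t\vec p_b)\cdot(s\vec q_a+t\vec q_b)+(-t\vec p_a+s\vec p_b)\cdot(-t\vec q_a+s\vec q_b)=(s^2+t^2)(\vec p_a\cdot\vec q_a+\vec p_b\cdot\vec q_b),
\end{equation*}
and the crossed $st$ terms cancel. Applying $s^2+t^2\equiv 1\bmod d$ reduces this to $\vec p_a\cdot\vec q_a+\vec p_b\cdot\vec q_b$, which is exactly the constant phase on the left. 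This yields equality as operators on all computational basis states, hence the claimed identity. The hypothesis $s^2+t^2\equiv 1\bmod d$ is used in exactly two places: once to ensure $U_{s,t}$ is unitary, and once to collapse the phase in the final step.
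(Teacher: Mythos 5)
Your proof is correct. The paper does not reprove this statement---it simply cites Proposition 35 of \cite{BGJ23b}---and your direct verification on the computational basis (using $w(\vec p,\vec q)|\vec k\rangle=\omega_d^{\vec p\cdot(\vec k+2^{-1}\vec q)}|\vec k+\vec q\rangle$, the explicit inverse of the symplectic rotation, and the cancellation of the cross terms so that the residual phase is $(s^2+t^2)\,2^{-1}(\vec p_a\cdot\vec q_a+\vec p_b\cdot\vec q_b)$) is exactly the standard argument one would find there.
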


\begin{Def}[\bf Quantum convolution defined by discrete beam splitter]
    The quantum convolution  of two $n$-qudit states $\rho$ and $\sigma$ is 
\begin{align}\label{eq:conv_B}
\rho \boxtimes_{s,t} \sigma =\Lambda_{\sigma}(\rho)= \Ptr{B}{ U_{s,t} (\rho \otimes \sigma) U^\dag_{s,t}}.
\end{align}
And the complementary one 
\begin{align}\label{eq:conv_C}
\rho \tilde{\boxtimes}_{s,t} \sigma =\Lambda^c_{\sigma}(\rho)= \Ptr{A}{ U_{s,t} (\rho \otimes \sigma) U^\dag_{s,t}}.
\end{align}
\end{Def}

\begin{Def}[\bf Mean state]\label{def:mean_state}
Given an $n$-qudit state $\rho$,  the mean state  $\mathcal{M}(\rho)$ is the 
operator with the characteristic function: 
\begin{align}\label{0109shi6}
\Xi_{\mathcal{M}(\rho)}(\vec x) :=
\left\{
\begin{aligned}
&\Xi_\rho ( \vec x) , && |\Xi_\rho ( \vec x)|=1,\\
& 0 , && |\Xi_\rho (  \vec x)|<1.
\end{aligned}
\right.
\end{align}
 The mean state $\mathcal{M}(\rho)$ is a stabilizer state.
\end{Def}

\begin{lem}[\cite{Gross06}]
The set of phase space point operators $\set{A(\vec{p}, \vec q)}_{(\vec{p}, \vec q)\in V^n}$
satisfies three properties when the local dimension $d$ is an odd prime:
\begin{enumerate}[(1)]
\item{}
 $\set{A(\vec{p}, \vec q)}_{(\vec{p}, \vec q)\in V^n}$ forms a Hermitian, orthonormal basis with respect to the inner product defined by 
$\inner{A}{B}=\frac{1}{d^n}\Tr{A^\dag B}$.
\item{}
$A(\vec{0},\vec 0)=\sum_{\vec{x}}\ket{-\vec{x}}\bra{\vec{x}}$ in the Pauli $Z$ basis.
\item{}
$A(\vec{p}, \vec q)=w(\vec{p}, \vec q)A(\vec{0}, \vec 0)w(\vec{p}, \vec q)^\dag$.
\end{enumerate}
\end{lem}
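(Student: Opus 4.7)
The plan is to handle the three properties in the order (3), (2), (1), so each step can leverage the previous one. Property (3) is immediate from the very definition $A(\vec p, \vec q) = w(\vec p, \vec q)\, A\, w(\vec p, \vec q)^\dag$, with $A = \frac{1}{d^n}\sum_{(\vec u, \vec v)\in V^n} w(\vec u, \vec v)$, since $A(\vec 0, \vec 0) = A$ (as $w(\vec 0, \vec 0) = I$).

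For property (2), I would compute $A$ directly in the computational basis. Using $w(u, v) = \omega_d^{-2^{-1}uv} Z^u X^v$ and tensoring, one obtains $w(\vec u, \vec v) \ket{\vec x} = \omega_d^{\vec u \cdot (\vec x + 2^{-1} \vec v)} \ket{\vec x + \vec v}$, whence
\[
A = \frac{1}{d^n}\sum_{\vec u, \vec v, \vec x} \omega_d^{\vec u \cdot (\vec x + 2^{-1}\vec v)}\, \ket{\vec x + \vec v}\bra{\vec x}.
\]
The sum over $\vec u$ produces $d^n\, \delta_{\vec x + 2^{-1} \vec v,\, \vec 0}$, which uses that $2^{-1}$ exists in $\mathbb{Z}_d$ (odd prime $d$) and forces $\vec v = -2\vec x$; what remains is $\sum_{\vec x}\ket{-\vec x}\bra{\vec x}$. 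Hermiticity of $A(\vec 0, \vec 0)$ then follows from the substitution $\vec x \mapsto -\vec x$, and Hermiticity of every $A(\vec p, \vec q)$ follows from (3), since conjugation by the unitary $w(\vec p, \vec q)$ preserves self-adjointness.

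For property (1), I still need orthonormality and completeness. Combining (3) with the Weyl commutation relation $w(\vec p, \vec q)\, w(\vec u, \vec v)\, w(\vec p, \vec q)^\dag = \omega_d^{\vec p \cdot \vec v - \vec q \cdot \vec u}\, w(\vec u, \vec v)$ gives the expansion
\[
A(\vec p, \vec q) = \frac{1}{d^n}\sum_{\vec u, \vec v} \omega_d^{\vec p \cdot \vec v - \vec q \cdot \vec u}\, w(\vec u, \vec v).
\]
Plugging this into $\frac{1}{d^n}\Tr{A(\vec p_1, \vec q_1)^\dag A(\vec p_2, \vec q_2)}$ and invoking the Weyl orthogonality $\frac{1}{d^n}\Tr{w(\vec u_1, \vec v_1)^\dag w(\vec u_2, \vec v_2)} = \delta_{(\vec u_1, \vec v_1),(\vec u_2, \vec v_2)}$ reduces the result to $\frac{1}{d^{2n}}\sum_{\vec u, \vec v} \omega_d^{(\vec p_2 - \vec p_1)\cdot \vec v - (\vec q_2 - \vec q_1) \cdot \vec u} = \delta_{\vec p_1, \vec p_2}\, \delta_{\vec q_1, \vec q_2}$. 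Since $|V^n| = d^{2n}$ matches the dimension of the operator space on $\mathcal{H}^{\otimes n}$, orthonormality upgrades to the basis property.

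The main obstacle is purely algebraic bookkeeping: carrying the quadratic phase $\omega_d^{-2^{-1}\vec p \cdot \vec q}$ in $w(\vec p, \vec q)$ consistently through both the matrix-element computation in (2) and the conjugation identity used for (1). The odd-prime hypothesis on $d$ enters at exactly the spot where $2^{-1}$ must exist in $\mathbb{Z}_d$; without it, the $\vec u$-sum in (2) would fail to isolate a unique $\vec v$, and the stated form of $A(\vec 0, \vec 0)$ would break down.
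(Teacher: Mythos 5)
Your proposal is correct and complete; the paper itself offers no proof of this lemma, merely citing Gross (2006), and your argument is precisely the standard derivation given there (conjugation covariance from the definition, the $\vec u$-sum collapsing $A$ to the parity operator $\sum_{\vec x}\ket{-\vec x}\bra{\vec x}$ via the invertibility of $2$ in $\mathbb{Z}_d$, and Weyl orthogonality plus the dimension count $\lvert V^n\rvert = d^{2n}$ for the basis property). No gaps; the signs in the commutation relation $w(\vec p,\vec q)\,w(\vec u,\vec v)\,w(\vec p,\vec q)^\dag=\omega_d^{\vec p\cdot\vec v-\vec q\cdot\vec u}\,w(\vec u,\vec v)$ and in the phase $\omega_d^{\vec u\cdot(\vec x+2^{-1}\vec v)}$ both check out against the paper's conventions.
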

In the main context, we denote $A(\vec 0, \vec 0)$ as $A$ for simplicity.
\begin{lem}
    The zero-mean stabilizer state is symmetric under the phase space inverse operator.
\end{lem}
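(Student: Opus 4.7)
The plan is to expand $\sigma$ in the Weyl basis, use the action of the phase space inverse operator $A$ on the Weyl operators, and then exploit the subgroup structure of the support of the characteristic function of a zero-mean stabilizer state.

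First, I would verify that $A\, w(\vec x)\, A^\dagger = w(-\vec x)$ for every $\vec x \in V^n$. Using $A = \sum_{\vec x}\ket{-\vec x}\bra{\vec x}$ in the computational basis, a direct single-qudit calculation gives $A X A^\dagger = X^{-1}$ and $A Z A^\dagger = Z^{-1}$, whence $A\, w(p,q)\, A^\dagger = \omega_d^{-2^{-1} p q}\, Z^{-p} X^{-q} = w(-p,-q)$; tensoring over the $n$ qudits yields the full statement on $V^n$. Note also that $A^\dagger = A$ and $A^2 = I$, so one may freely write either $A \sigma A$ or $A \sigma A^\dagger$.

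Next, expanding $\sigma$ in the Weyl basis as in the main text and applying the previous step produces
\[
A \sigma A^\dagger = \frac{1}{d^n}\sum_{\vec x \in V^n}\Xi_\sigma(\vec x)\, w(-\vec x) = \frac{1}{d^n}\sum_{\vec x \in V^n}\Xi_\sigma(-\vec x)\, w(\vec x),
\]
where in the second equality I reindex $\vec x \mapsto -\vec x$. Because the Weyl operators form an orthonormal basis, the identity $A \sigma A^\dagger = \sigma$ is therefore equivalent to the pointwise symmetry $\Xi_\sigma(-\vec x) = \Xi_\sigma(\vec x)$ for every $\vec x \in V^n$.

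Finally, for a zero-mean stabilizer state the characteristic function takes values in $\{0,1\}$ and its support $\mathrm{supp}(\Xi_\sigma) = \{\vec x : \Xi_\sigma(\vec x) = 1\}$ is an isotropic subgroup of $V^n$; this is the standard structural description of stabilizer states recorded in~\cite{BGJ23a,BGJ23b} and traces back to Gross's work \cite{Gross06}. Since any subgroup of an abelian group is closed under taking additive inverses, $-\vec x \in \mathrm{supp}(\Xi_\sigma)$ whenever $\vec x \in \mathrm{supp}(\Xi_\sigma)$, which together with $\Xi_\sigma \in \{0,1\}$ immediately gives $\Xi_\sigma(-\vec x) = \Xi_\sigma(\vec x)$ and completes the argument. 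The main obstacle is precisely this last input, the subgroup closure of $\mathrm{supp}(\Xi_\sigma)$, but it is an established property of zero-mean stabilizer states and can be invoked rather than re-derived here.
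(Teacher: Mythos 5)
Your proof is correct and follows essentially the same route as the paper: conjugate the Weyl operators by $A$ to get $A\,w(\vec x)\,A^\dagger = w(-\vec x)$, reduce the symmetry condition to $\Xi_\sigma(\vec x)=\Xi_\sigma(-\vec x)$, and then use the $\{0,1\}$-valued characteristic function of a zero-mean stabilizer state. Your final step is in fact slightly more explicit than the paper's, which simply asserts the evenness of $\Xi_\sigma$, whereas you justify it via the subgroup (hence inverse-closed) structure of its support.
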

\begin{proof}
By simple calculation, we have 
\begin{eqnarray}
    AZA^{\dag}=\sum_{j}\omega^j_dA\proj{j}A^\dag=\sum_{j}\omega^j_d\proj{-j}=Z^{-1}, \\
    AXA^{\dag}=\sum_{j}A\ket{j+1}\bra{j}A=\sum_{j}\ket{-j-1}\bra{-j}=X^{-1},
\end{eqnarray}
which implies that 
\begin{eqnarray}
    Aw(\vec p, \vec q)A^\dag=w(-\vec p, -\vec q).
\end{eqnarray}
Hence, we have 
\begin{eqnarray}
    \Xi_{A\rho A^\dag}(\vec x)=\Xi_{\rho}(-\vec x), \quad\forall \vec x\in V^n.
\end{eqnarray}
Hence, the state $\rho$ is symmetric under the phase space inverse operator iff 
$\Xi_{\rho}(\vec x)=\Xi_{\rho}(-\vec x)$, for any $\vec x\in V^n$.
For a zero-mean stabilizer state $\rho$, we have 
$\Xi_{\rho}(\vec x)=\Xi_{\rho}(-\vec x)$, which is either 0 or 1. Therefore, it is symmetric under the phase space inverse operator.

\end{proof}

 The  phase space point operators  can be used to define the discrete Wigner function 
 \begin{eqnarray}
     W_{\rho}(\vec p, \vec q)=\Tr{\rho A(\vec p, \vec q)}.
 \end{eqnarray}
 One important result about the discrete Wigner function is the discrete Hudson theorem~\cite{Gross06}. It states that for any 
 $n$-qudit pure state $\psi$ with odd prime $d$, it is a stabilizer state iff the discrete Wigner function $W_{\psi}$ is nonnegative.

\begin{lem}\label{lem:key_tech}
The quantum convolution $\boxtimes_{s,t}$ satisfies the following properties:
     \begin{enumerate}
    \item {\bf Convolution-multiplication duality:} $\Xi_{\rho\boxtimes_{s,t}\sigma}(\vec x)=\Xi_{\rho}(s\vec x)\Xi_{\sigma}(t\vec x)$, for any $\vec x\in V^n$. (See Proposition 11 in \cite{BGJ23a}.)
    \item  {\bf Convolutional stability:} If both $\rho$ and $\sigma$ are stabilizer states,  then 
$\rho\boxtimes_{s,t}\sigma$ is still a stabilizer state. (See Proposition 12 in \cite{BGJ23a}.)
    \item {\bf Quantum central limit theorem:} Let 
$\boxtimes^N\rho:=(\boxtimes^{N-1}\rho)\boxtimes\rho$ be the $N^{\rm th}$ repeated quantum convolution, and  $\boxtimes^2\rho=\rho\boxtimes\rho$. Then 
$\boxtimes^N\rho$ converges to a stabilizer state $\mathcal{M}(\rho)$  as $N\to \infty$. (See Theorem 24 in~\cite{BGJ23a}.)

\item {\bf Quantum maximal entropy principle:} $S(\rho)\leq S(\mathcal{M}(\rho))$. (See Theorem 4 in~\cite{BGJ23a}.) In general, $S(\rho\boxtimes\sigma)\geq \max\set{S(\rho), S(\sigma)}$ (See Theorem 14 in~\cite{BGJ23a}.) 

\item{\bf Commutativity with Clifford unitaries: }For any Clifford unitary $U$, there exists some Clifford unitary $V$ such
that $(U\rho U^\dag)\boxtimes (U\sigma U^\dag)=V(\rho\boxtimes \sigma)V^\dag$ for any input states $\rho$ and $\sigma$. (See Lemma 85  in~\cite{BGJ23b}.) Similarly, $(U\rho U^\dag)\tilde{\boxtimes} (U\sigma U^\dag)=V(\rho\tilde{\boxtimes} \sigma)V^\dag$.

\item{\bf Wigner function positivity:} If $s\equiv t\mod d$, then the discrete Wigner function $W_{\rho\boxtimes\sigma}\geq 0$
for any n-qudit states $\rho$ and $\sigma$. (See Remark 83 in~\cite{BGJ23b}.)
\end{enumerate}
\end{lem}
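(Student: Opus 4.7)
My strategy is to base everything on the single Weyl-covariance identity from Proposition~35 stated immediately above the lemma,
\[
U_{s,t}\bigl(w(\vec a)\otimes w(\vec b)\bigr)U_{s,t}^\dag = w(s\vec a+t\vec b)\otimes w(-t\vec a+s\vec b),
\]
which expresses the beam splitter as a symplectic rotation on the doubled phase space; the hypothesis $s^2+t^2\equiv 1\bmod d$ makes this rotation invertible, with explicit inverse $(\vec a',\vec b')\mapsto(s\vec a'-t\vec b',\,t\vec a'+s\vec b')$. Combined with the Weyl expansion $\rho = d^{-n}\sum_{\vec x}\Xi_\rho(\vec x)w(\vec x)$, this identity is the single engine driving all six items.

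For item~1, I would compute
\[
\Xi_{\rho\boxtimes\sigma}(\vec x) = \Tr\bigl[(\rho\otimes\sigma)\,U_{s,t}^\dag(w(-\vec x)\otimes I_B)U_{s,t}\bigr],
\]
push $U_{s,t}^\dag$ through using the inverse Weyl covariance to obtain $w(-s\vec x)\otimes w(-t\vec x)$, and factor the trace into $\Xi_\rho(s\vec x)\,\Xi_\sigma(t\vec x)$. Items~2 and~5 then drop out. For item~2, stabilizer states are characterized by their characteristic functions being unimodular on an isotropic subspace and zero elsewhere; the product $\Xi_\rho(s\vec x)\Xi_\sigma(t\vec x)$ inherits this form because the intersection of the linear preimages of two isotropic subspaces is again isotropic. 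For item~5, Clifford conjugation acts on Weyls by $Uw(\vec a)U^\dag=e^{i\theta_{\vec a}}w(F\vec a)$ for some symplectic $F$, so both sides of the desired identity have characteristic functions related to the pointwise product $\Xi_\rho(s\vec x)\Xi_\sigma(t\vec x)$ by the same symplectic automorphism of $V^n$, from which the required Clifford $V$ can be read off explicitly.

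For item~3, iterating item~1 expresses $\Xi_{\boxtimes^N\rho}(\vec x)$ as a product of $N$ values of $\Xi_\rho$ at symplectically rotated points. Since $|\Xi_\rho(\vec x)|\leq 1$ with equality precisely on the support of the mean state $\mathcal{M}(\rho)$, the magnitudes decay to zero pointwise off this support, while on it the $0/1$ characteristic function of $\mathcal{M}(\rho)$ is reproduced; finiteness of $V^n$ upgrades pointwise convergence of characteristic functions to trace-norm convergence of states. For item~4, I would view $\rho\boxtimes\sigma$ as the output of a Weyl-covariant channel (a direct consequence of item~1): purifying $\sigma$ as $\ket{\sigma}_{BE}$ and $\rho$ as $\ket{\rho}_{AA'}$, the full state on $ABA'E$ is pure, the purifier $A'E$ is untouched by $U_{s,t}$, and entropy identities for pure states combined with subadditivity applied to the complementary cut deliver $S(\rho\boxtimes\sigma)\geq S(\rho)$; symmetry in the two arguments gives the $\max$. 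The inequality $S(\rho)\leq S(\mathcal{M}(\rho))$ then follows by iterating and using the CLT of item~3 together with lower semicontinuity of entropy. For item~6, the hypothesis $s\equiv t\bmod d$ collapses the symplectic Fourier inversion of $\Xi_\rho(s\vec x)\Xi_\sigma(t\vec x)$ into a genuine classical convolution $W_\rho\star W_\sigma$ on $V^n$ after a change of variables, and pointwise nonnegativity is preserved by classical convolution.

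The main obstacle is the entropy inequality of item~4: generic quantum channels can decrease von Neumann entropy, so one must genuinely use the Weyl covariance of $U_{s,t}$ and the specific structure of partial-tracing an auxiliary register --- no generic data-processing argument suffices on its own, and closing the loop from $S(\rho\boxtimes\sigma)\geq S(\rho)$ to $S(\rho)\leq S(\mathcal{M}(\rho))$ needs the CLT to be fed back in. A secondary subtlety is item~6, because the Wigner function of a convolution is not automatically a classical convolution of Wigner functions; the reduction requires the arithmetic identity $s\equiv t\bmod d$ to cancel a twist in the symplectic Fourier transform.
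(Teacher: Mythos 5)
The paper does not actually prove Lemma~5: every item is delegated to a citation in \cite{BGJ23a,BGJ23b}, so you are supplying proofs the paper omits. Your items~1, 2, 3 and~5 are essentially sound and follow the same route as the cited results (item~1 is exactly the standard computation from the Weyl covariance; items~3 and~5 gloss over phase bookkeeping --- the convergence of the \emph{phases} of $\Xi_{\boxtimes^N\rho}$ on the modulus-one subgroup, and the check that the combined phase function is admissible for a single Clifford $V$ --- but these are fillable). Two items, however, contain genuine gaps.

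For item~4, the inequality $S(\rho\boxtimes\sigma)\geq S(\rho)$ cannot be extracted from ``entropy identities for pure states combined with subadditivity applied to the complementary cut.'' With the purified output on $A'ABE$, subadditivity and Araki--Lieb only give bounds such as $S(\tau_A)=S(\tau_{A'BE})\geq S(\tau_{A'})-S(\tau_{BE})$, and $S(\tau_{BE})$ is uncontrolled; no combination of generic entropy inequalities forces a channel's output entropy to exceed its input entropy. The mechanism that actually works is unitality: by item~1, $\Xi_{(I/d^n)\boxtimes\sigma}(\vec x)=\delta_{s\vec x,\vec 0}\,\Xi_\sigma(t\vec x)=\delta_{\vec x,\vec 0}$ because $s$ is invertible, so $\Lambda_\sigma(I/d^n)=I/d^n$; a unital channel satisfies $\Lambda_\sigma(\rho)\prec\rho$ and hence $S(\Lambda_\sigma(\rho))\geq S(\rho)$, and the identical computation in the second argument gives $\geq S(\sigma)$. (Your derivation of $S(\rho)\leq S(\mathcal{M}(\rho))$ from this monotonicity plus the CLT is then fine.) For item~6, your reduction of $W_{\rho\boxtimes\sigma}$ to a classical convolution of rescaled Wigner functions is correct, but the justification ``pointwise nonnegativity is preserved by classical convolution'' is a non sequitur: $W_\rho$ and $W_\sigma$ are generically negative somewhere (that is precisely what makes magic states magic), so positivity of the inputs is not available, and a convolution of two signed functions need not be nonnegative. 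The correct mechanism is that for $s\equiv t \bmod d$ one has $U_{s,t}^\dag(A\otimes I)U_{s,t}=\frac{1}{d^n}\sum_{\vec v}w(s\vec v)\otimes w(t\vec v)=\frac{1}{d^n}\sum_{\vec v}w(\vec v)\otimes w(\vec v)$, which is a positive (rank-one, up to normalization) operator, so $W_{\rho\boxtimes\sigma}(\vec u)=\Tr{(\rho\otimes\sigma)\,U_{s,t}^\dag(A(\vec u)\otimes I)U_{s,t}}\geq 0$; equivalently, each value of your classical convolution equals a constant times $\Tr{\rho\,\tilde\sigma_{\vec u}}\geq 0$, where $\tilde\sigma_{\vec u}$ is $\sigma$ conjugated by a phase-space point operator. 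Without one of these observations the positivity claim remains unproven.
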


\section{Discrete beam splitter with stabilizer environment state }\label{appen:stab}
\begin{lem}\label{lem:convex_Coh}
Given a quantum channel $\Lambda=\sum_ip_i\Lambda_i$ acting on the system $\mathcal{H}_A$, we have the following convexity of coherent information,
   \begin{eqnarray}
 S(\Lambda(\rho))-S(I\ot\Lambda(\proj{\Psi_{RA}}))
 \leq  \sum_ip_i[S(\Lambda_i(\rho))-S(I\ot\Lambda_i(\proj{\Psi_{RA}})],
   \end{eqnarray}
   where $\Psi_{AR}$ is a purification of $\rho$ on the system $\mathcal{H}_A$.
\end{lem}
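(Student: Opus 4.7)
The plan is to prove the convexity via a standard flag-extension argument, combining data processing for coherent information with a direct computation of the entropies on classical-quantum (cq) states. First I would rewrite the statement in the more suggestive form $I_c(\rho,\Lambda)\leq \sum_i p_i\, I_c(\rho,\Lambda_i)$ after observing that, for a purification $\Psi_{RA}$ of $\rho$ and any channel $\mathcal{N}$ on $A$, $S((I_R\otimes \mathcal{N})(\proj{\Psi_{RA}}))=S(\mathcal{N}^c(\rho))$ because the Stinespring dilation produces a pure state on $R\otimes B\otimes E$, making the reductions on $RB$ and $E$ isospectral.

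Next I would introduce the flagged channel
\begin{equation}
\tilde{\Lambda}(\rho)\;=\;\sum_i p_i\,\Lambda_i(\rho)\otimes\proj{i}_F,
\end{equation}
where $F$ is an ancillary classical register. Since $\Lambda=\Tr_F\circ\tilde{\Lambda}$, the data-processing inequality for coherent information (applying a CPTP map $\Tr_F$ on the output of a channel cannot increase $I_c$) yields
\begin{equation}
I_c(\rho,\Lambda)\;\leq\; I_c(\rho,\tilde{\Lambda}).
\end{equation}

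Then I would compute $I_c(\rho,\tilde{\Lambda})$ directly. A convenient Stinespring dilation of $\tilde{\Lambda}$ is obtained by writing $\tilde V\ket{\psi}_A=\sum_i\sqrt{p_i}\,V_i\ket{\psi}_A\otimes\ket{i}_F\otimes\ket{i}_{F'}$, where $V_i$ is a dilation of $\Lambda_i$ and $F'$ is an environmental copy of the flag. Tracing $E F'$ recovers $\tilde{\Lambda}(\rho)$, and tracing $BF$ gives the complementary channel
\begin{equation}
\tilde{\Lambda}^c(\rho)\;=\;\sum_i p_i\,\Lambda_i^c(\rho)\otimes\proj{i}_{F'}.
\end{equation}
Both $\tilde{\Lambda}(\rho)$ and $\tilde{\Lambda}^c(\rho)$ are cq-states on the flag, so their entropies decompose as $S(\tilde{\Lambda}(\rho))=H(p)+\sum_i p_i S(\Lambda_i(\rho))$ and $S(\tilde{\Lambda}^c(\rho))=H(p)+\sum_i p_i S(\Lambda_i^c(\rho))$. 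The Shannon terms $H(p)$ cancel, giving $I_c(\rho,\tilde{\Lambda})=\sum_i p_i\, I_c(\rho,\Lambda_i)$, and combining with the data-processing bound closes the argument.

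There is no genuinely hard step here; the only subtlety is checking that data processing is being applied correctly, namely to a CPTP post-processing on the output system (and not on the reference), and that $\psi_{RA}$ is the same purification used throughout so that $I_c$ is well defined uniformly across $\Lambda$ and the $\Lambda_i$'s. Once the flag extension is set up, the decomposition of cq-entropies makes the bound immediate.
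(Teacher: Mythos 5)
Your proof is correct, but it takes a genuinely different route from the paper. The paper's argument is a two-line application of the joint convexity of relative entropy: it observes that
\begin{equation*}
D\!\left((I\ot\Lambda)(\proj{\Psi_{RA}})\,\Big\|\,\tfrac{I_R}{d_R}\ot\Lambda(\rho)\right)=\log d_R+S(\Lambda(\rho))-S\!\left((I\ot\Lambda)(\proj{\Psi_{RA}})\right),
\end{equation*}
i.e.\ the coherent information plus a constant, so joint convexity of $D(\cdot\|\cdot)$ in both arguments (applied to the mixtures $\sum_i p_i (I\ot\Lambda_i)(\proj{\Psi_{RA}})$ and $\sum_i p_i \frac{I_R}{d_R}\ot\Lambda_i(\rho)$) immediately yields the claim after the $\log d_R$ terms cancel. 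You instead build the flagged channel $\tilde\Lambda(\rho)=\sum_i p_i\Lambda_i(\rho)\ot\proj{i}_F$, invoke data processing for coherent information under the post-processing $\Tr_F$, and evaluate $I_c(\rho,\tilde\Lambda)=\sum_i p_i I_c(\rho,\Lambda_i)$ exactly via the cq-entropy decomposition; your dilation $\tilde V$ is indeed an isometry and your identification of $\tilde\Lambda^c$ is correct. The two arguments ultimately rest on the same pillar --- quantum data processing for $I_c$ is itself proved from strong subadditivity, which is equivalent to joint convexity of relative entropy --- but the paper's version is more economical, while yours buys the sharper statement that the flagged channel attains the right-hand side with equality (the convexity bound is exactly the loss from discarding the flag), and it avoids having to recognize the relative-entropy-to-maximally-mixed identity. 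Both are valid proofs of the lemma.
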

\begin{proof}
By the joint convexity of relative entropy, we have 
   \begin{eqnarray}
    D(I\ot\Lambda(\proj{\Psi_{RA}}||\frac{I_R}{d_R}\ot \Lambda(\rho)))
\leq \sum_ip_iD(I\ot\Lambda_i(\proj{\Psi_{RA}}||\frac{I_R}{d_R}\ot \Lambda_i(\rho))),
\end{eqnarray}
where $d_R$ is the dimension of the ancilla system $\mathcal{H}_R$. This
 is equivalent to
\begin{eqnarray}
 S(\Lambda(\rho))-S(I\ot\Lambda(\proj{\Psi_{RA}}))
 \leq  \sum_ip_i[S(\Lambda_i(\rho))-S(I\ot\Lambda_i(\proj{\Psi_{RA}})].
\end{eqnarray}

\end{proof}

\begin{thm}[Restatement of Theorem 2]\label{thm:1_APP}
Given nontrivial $s,t\in \mathbb{Z}_d$ with $s^2+t^2\equiv 1\mod d$, and the fixed environment state
 $\sigma$  being any convex combination of stabilizer states, we have 
\begin{eqnarray}
    Q(\Lambda_{s,\sigma})=0.
\end{eqnarray}
\end{thm}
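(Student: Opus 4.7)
The strategy is to show that whenever the environmental state is a pure stabilizer state, the channel $\Lambda_{s,\sigma_0}$ is entanglement-breaking (EB). Since EB channels are closed under convex combinations and under tensor products, and EB channels have zero quantum capacity, this immediately delivers $Q(\Lambda_{s,\sigma})=0$ for any $\sigma$ in the convex hull of stabilizer states. The convexity bound on coherent information (Lemma~\ref{lem:convex_Coh}) supplied in this appendix offers an equivalent single-shot route, but the EB property is what cleanly handles the regularization over tensor powers.

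First, I would reduce to the canonical state $\sigma_0 = \proj{0}^{\ot n}$ using Clifford commutativity (property 5 of Lemma~\ref{lem:key_tech}). Any pure stabilizer state can be written as $|\psi\rangle\langle\psi| = U\proj{0}^{\ot n}U^\dagger$ for some Clifford $U$, and the identity $(U\rho U^\dagger)\boxtimes_{s,t}(U\sigma U^\dagger) = V(\rho\boxtimes_{s,t}\sigma)V^\dagger$ translates to $\Lambda_{s,|\psi\rangle\langle\psi|}(\tilde\rho) = V\,\Lambda_{s,\proj{0}^{\ot n}}(U^\dagger\tilde\rho U)\,V^\dagger$ for an associated Clifford $V$. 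Unitary pre- and post-processing preserves the EB property, so it is enough to verify the single special case $\sigma_0 = \proj{0}^{\ot n}$.

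The central computation is to exhibit $\Lambda_{s,\proj{0}^{\ot n}}$ in measure-prepare form. From the defining action $U_{s,t}(|\vec i\rangle_A\ot|0\rangle^{\ot n}_B) = |s\vec i\rangle_A\ot|{-t\vec i}\rangle_B$, and the fact that $t$ is invertible in $\mathbb{Z}_d$ for nontrivial parameters (since $d$ is prime and $t\not\equiv 0$), the cross-terms vanish after tracing out $B$ via $\langle -t\vec i'|-t\vec i\rangle = \delta_{\vec i,\vec i'}$, yielding
\begin{equation*}
\Lambda_{s,\proj{0}^{\ot n}}(\rho) \;=\; \sum_{\vec i\in\mathbb{Z}_d^n}\langle\vec i|\rho|\vec i\rangle\,|s\vec i\rangle\langle s\vec i|.
\end{equation*}
This is manifestly a measure-prepare channel---project in the computational basis, then prepare $|s\vec i\rangle$---and hence entanglement-breaking.

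Putting the pieces together: writing $\sigma = \sum_i p_i\sigma_i$ with each $\sigma_i$ a pure stabilizer state, $\Lambda_{s,\sigma} = \sum_i p_i\Lambda_{s,\sigma_i}$ is a convex combination of EB channels, hence EB; the tensor power $\Lambda_{s,\sigma}^{\ot N}$ is likewise EB for every $N$, so its coherent information is non-positive on every input, giving $Q^{(1)}(\Lambda_{s,\sigma}^{\ot N}) = 0$ for all $N$ and thus $Q(\Lambda_{s,\sigma}) = 0$. The main subtlety is precisely this passage from single-shot to regularized capacity: because coherent information can be super-additive in general, the convexity bound on $I_c$ alone does not close the argument, and it is the closure of the EB class under tensor products that secures the final conclusion.
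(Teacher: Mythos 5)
Your proof is correct, and it takes a genuinely different route from the paper's. The paper reduces to a pure stabilizer environment via the convexity of coherent information (Lemma~\ref{lem:convex_Coh}), then uses the stabilizer group of $\sigma$ to show $\Lambda_{\sigma}(\rho)=\Lambda_{\sigma}(\Delta(\rho))=\sum_{\vec a}p_{\vec a}\proj{\varphi_{t\vec a}}$ for a dephasing channel $\Delta$ in the joint eigenbasis of $G_\sigma$, computes $S(\Lambda_\sigma(\rho))$ and $S(I\ot\Lambda_\sigma(\proj{\Psi_{RA}}))$ explicitly to conclude $I_c\leq 0$, and handles the regularization by observing that $\Lambda_{s,\sigma}^{\ot N}=\Lambda_{s,\sigma^{\ot N}}$ with $\sigma^{\ot N}$ still a convex combination of stabilizer states --- so, contrary to your closing remark, the paper's single-shot argument does close the regularization gap, just by a different mechanism than yours. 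Your approach instead establishes the strictly stronger structural fact that each $\Lambda_{s,\proj{\psi}}$ is entanglement-breaking: the computation $\Lambda_{s,\proj{0}^{\ot n}}(\rho)=\sum_{\vec i}\langle\vec i|\rho|\vec i\rangle\proj{s\vec i}$ is correct (the cross terms vanish because $t$ is invertible mod the prime $d$), the Clifford reduction via Lemma~\ref{lem:key_tech}.5 is valid since every pure stabilizer state lies in the Clifford orbit of $\proj{0}^{\ot n}$, and closure of the entanglement-breaking class under convex combinations, tensor products, and unitary pre/post-processing then yields $Q=0$ at once. What your route buys is a cleaner and more portable conclusion (entanglement-breaking kills not just $Q$ but also, e.g., the two-way-assisted capacities are constrained and the statement is manifestly stable under composition); what the paper's route buys is an explicit description of the output states and entropies, which it reuses almost verbatim in the entanglement-fidelity analysis of Proposition~\ref{prop:main1}. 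Note also that the paper's explicit form $\Lambda_\sigma(\rho)=\sum_{\vec a}p_{\vec a}\proj{\varphi_{t\vec a}}$ is itself a measure-and-prepare channel, so your observation is implicit there even though the paper never invokes it.
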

\begin{proof}
Let us first prove that  $Q^{(1)}(\Lambda_{s,\sigma})=0$, which means
that for any $n$-qudit state $\rho$ with the purification $\Psi_{RA}$, 
\begin{eqnarray}
    S(\Lambda(\rho))-S(I\ot\Lambda(\proj{\Psi_{RA}}))\leq 0.
\end{eqnarray}
By Lemma \ref{lem:convex_Coh}, we only need to consider the case where $\sigma$ is a pure stabilizer state on an $n$-qudit system.
Then, there exists some abelian group of Weyl operators with $n$ generators $G_{\sigma}:=\set{w(\vec x): w(\vec x)\sigma=\sigma}$.
By a simple calculation, $\Delta(\rho):=\frac{1}{d^n}\sum_{w(\vec x)\in G_{\sigma}}w(\vec x)\rho w(\vec x)^\dag$ is the full-dephasing channel, and there exists 
an orthonormal basis $\ket{\varphi_{\vec a}}$ such that 
$$w(\vec x)\ket{\varphi_{\vec a}}=\omega^{\vec x\cdot\vec a}_d\ket{\varphi_{\vec a}}, $$ for any $w(\vec x)\in G_{\sigma}$, and 
$$\Delta(\rho)=\frac{1}{d^n}\sum_{\vec a\in \mathbb{Z}^n_d}\bra{\varphi_{\vec a}}\rho\ket{\varphi_{\vec a}}\proj{\varphi_{\vec a}}.$$ Note that, $\sigma=\proj{\varphi_{\vec 0}}$. Let us denote $p_{\vec a}=\bra{\varphi_{\vec a}}\rho\ket{\varphi_{\vec a}}$.
Then, we have 
\begin{eqnarray}
\Lambda_{\sigma}(\rho)=\Lambda(\rho\ot\sigma)
=\frac{1}{d^n}\sum_{w(\vec x)\in G_{\sigma}}\Lambda(\rho\ot w(\vec x)\sigma w(\vec x)^\dag)
=\Lambda\left(\frac{1}{d^n}\sum_{w(\vec x)\in G_{\sigma}}w(\vec x)\rho w(\vec x)^\dag\ot \sigma \right)
=\Lambda_{\sigma}(\Delta(\rho)),
\end{eqnarray}
where the second equality comes from the definition of $G_{\sigma}$, and 
the 
third equality comes from the following property (proved as Proposition 41 in\cite{BGJ23b}), namely 
\begin{align*}
\Lambda(w(\vec x)\ot w(\vec y)\rho_{AB}w(\vec x)^\dag\ot w(\vec y)^\dag)
=w(s\vec x+t\vec y)\rho_{AB}w(s\vec x+t\vec y)^\dag,
\end{align*}
where $\Lambda(\rho_{AB})=\Ptr{B}{U_{s,t}\rho_{AB}U^\dag_{s,t}}$. 
Moreover, based on the convolution-multiplication duality in Lemma \ref{lem:key_tech}, 
we have 
$$\Xi_{\sigma\boxtimes_{s,t}\proj{\varphi_{\vec a}}}(\vec x)
=\Xi_{\proj{\varphi_{\vec 0}}\boxtimes_{s,t}\proj{\varphi_{\vec a}}}(\vec x)
=\Xi_{\proj{\varphi_{\vec 0}}}(s\vec x)\Xi_{\proj{\varphi_{\vec a}}}(t\vec x)
=\omega^{-t\vec a\cdot\vec x}_d\delta_{\vec x\in G_{\sigma}}
=\Xi_{\proj{\varphi_{t\vec a}}}(\vec x).
$$
Hence, 
\begin{eqnarray}
 \Lambda_{\sigma}(\proj{\varphi_{\vec a}})  = \sigma\boxtimes_{s,t}\proj{\varphi_{\vec a}}
 =\proj{\varphi_{t\vec a}}_A.
\end{eqnarray}
Thus,
\begin{eqnarray}
 \Lambda_{\sigma}(\rho)= \Lambda_{\sigma}(\Delta(\rho))  =\sum_{\vec a}p_{\vec a}\Lambda_{\sigma}(\proj{\varphi_{\vec a}})
=\sum_{\vec a}p_{\vec a}\proj{\varphi_{t\vec a}}_A.
\end{eqnarray}
Similarly, for the purification $\Psi_{RA}$, we have 
\begin{eqnarray}
    I\ot\Lambda_{\sigma}(\proj{\Psi_{RA}})
    =\sum_{\vec a}p_{\vec a}\tau^R_{\vec a}\ot \proj{\varphi_{t\vec a}}_A,
\end{eqnarray}
where $\tau^R_{\vec a}=\Tr{\proj{\Psi_{RA}}I_R\ot\proj{\varphi_{\vec a}}_A}/p_{\vec a}$.
Hence, we have
\begin{eqnarray}
    S(\Lambda_{\sigma}(\rho))&=&S(\vec p),\\
    S( I\ot\Lambda_{\sigma}(\proj{\Psi_{RA}}))&=&S\left(\sum_{\vec a}p_{\vec a}\tau^R_{\vec a}\ot \proj{\varphi_{t\vec a}}_A\right)=S(\vec p)+\sum_{\vec a}p_{\vec a}S(\tau^R_{\vec a})\;.
\end{eqnarray}
Here $\vec p=\set{p_{\vec a}}_{\vec a\in \mathbb{Z}^n_d }$ is the probability distribution. Therefore, we have 
\begin{eqnarray}
I_c(\rho,\Lambda_{\sigma})\leq 0,
\end{eqnarray}
for any input state $\rho$.
That is, $Q^{(1)}(\Lambda_{\sigma})=0$.
Since $\Lambda^{\ot N}_{s,\sigma}=\Lambda_{s,\sigma^{\ot N}}$, and the tensor product of stabilizer states  is still a stabilizer
 state, we can repeat the above proof for $Q^{(1)}(\Lambda^{\ot N}_{\sigma})$ for any integer $N$. Hence, we have 
 $Q(\Lambda_{\sigma})=0$.

\end{proof}

\section{Discrete beam splitter with environment state being a magic state}\label{appen:dbs_mag}
\begin{thm}[Restatement of Theorem 3]
Given  nontrivial $s,t\in \mathbb{Z}_d$ with $s^2+t^2\equiv 1\mod d$, there exists 
some 1-qudit magic state $\ket{magic}$ a universal constant $c>0$, independent of $d$ and $n$, 
such that   
\begin{eqnarray}
Q(\Lambda_{s,\sigma_k})\geq kc,
\end{eqnarray}
where $\sigma_k$ is  a quantum state generated by any Clifford circuit $U_{cl}$ on 
$\ket{\text{magic}}^{\ot k}\ot \ket{0}^{n-k}$, i.e., $\sigma_k=U_{\text{cl}}\proj{\text{magic}}^{\ot k}\ot\proj{0}^{n-k}U^\dag_{\text{cl}}$.
\end{thm}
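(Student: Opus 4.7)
The plan is to reduce the general environmental state $\sigma_k$ to a product form, tensor-factorize the induced channel, and then establish a single-qudit lower bound on coherent information.

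First, I would apply the Clifford-commutativity of the discrete beam splitter (Lemma~\ref{lem:key_tech}, item~5) to both the main and the complementary channels. Writing $\sigma_k = U_{\mathrm{cl}}\, \sigma_0\, U_{\mathrm{cl}}^\dag$ with $\sigma_0 := \proj{\mathrm{magic}}^{\otimes k} \otimes \proj{0}^{\otimes(n-k)}$, this gives $\Lambda_{s,\sigma_k}(\rho) = V\, \Lambda_{s,\sigma_0}(U_{\mathrm{cl}}^\dag \rho\, U_{\mathrm{cl}})\, V^\dag$ and an analogous identity for $\Lambda^c_{s,\sigma_k}$ with a second Clifford $V'$. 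Since quantum capacity is invariant under pre- and post-composition by unitary channels, $Q(\Lambda_{s,\sigma_k}) = Q(\Lambda_{s,\sigma_0})$. Because $U_{s,t}$ acts qudit-wise, $\Lambda_{s,\sigma_0}$ factorizes as $\Lambda_{s,\mathrm{magic}}^{\otimes k} \otimes \Lambda_{s,\proj{0}}^{\otimes(n-k)}$, and additivity-from-below of quantum capacity under tensor products yields $Q(\Lambda_{s,\sigma_k}) \geq k\, Q(\Lambda_{s,\mathrm{magic}})$. Hence it suffices to exhibit a 1-qudit magic state with $Q(\Lambda_{s,\mathrm{magic}}) \geq c$ for a universal $c > 0$ independent of $d$ and $n$.

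For the generic regime $s^2 \not\equiv t^2 \pmod d$, I would take $\ket{\mathrm{magic}} = (\ket{0} + \ket{1})/\sqrt{2}$ (magic for any odd prime $d$) together with the mixed input $\rho_A = \tfrac12(\proj{0} + \proj{t^{-1}s})$ from the sketch in the main text. Direct computation using $U_{s,t}\ket{i}\ket{j} = \ket{si+tj}\ket{-ti+sj}$ on each pure-state constituent, followed by the appropriate partial trace, produces $\Lambda_{s,\mathrm{magic}}(\rho_A) = \tfrac14(\proj{0} + \proj{t} + \proj{t^{-1}s^2} + \proj{t^{-1}})$ (four mutually orthogonal atoms) and $\Lambda^c_{s,\mathrm{magic}}(\rho_A) = \tfrac12 \proj{0} + \tfrac14(\proj{s} + \proj{-s})$ (three orthogonal atoms). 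Distinctness of the four $A$-labels and the three $B$-labels is guaranteed by $s, t \neq 0$, $s^2, t^2 \neq 1$, and $s^2 \neq t^2$. Therefore $I_c = \log 4 - \tfrac32 = \tfrac12$, and $c = \tfrac12$ works in this regime.

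The main obstacle is the exceptional case $s^2 \equiv t^2 \pmod d$, equivalent (modulo $s^2+t^2\equiv 1$) to $2s^2 \equiv 1$ and $s = \pm t$. Here the $A\leftrightarrow B$ symmetry of the balanced beam splitter forces the $\Lambda$- and $\Lambda^c$-output distributions on any convex combination of two computational-basis states to be related by a permutation of weights, which equalizes the two entropies and kills the single-letter coherent information. To break this symmetry I would either (i) select a different 1-qudit magic state, e.g.\ $(\ket{0} + \ket{a})/\sqrt{2}$ with $a \neq 1$, together with an input $\tfrac12(\proj 0 + \proj{c})$ chosen so that the $A$-side labels $\{0, ta, sc, sc+ta\}$ remain distinct while the $B$-side labels $\{0, sa, -tc, -tc+sa\}$ collide, producing a strictly smaller $B$-entropy, or (ii) fall back on the multi-letter bound $Q(\Lambda_{s,\mathrm{magic}}) \geq Q^{(1)}(\Lambda_{s,\mathrm{magic}}^{\otimes N})/N$ for some $N \geq 2$ with an entangled multi-qudit input, exploiting super-additivity of coherent information. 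Extracting a universal constant $c$ that works uniformly over all admissible primes $d$ in this exceptional regime is the central technical challenge.
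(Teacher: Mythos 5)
Your reduction steps (Clifford commutativity, tensor factorization, additivity-from-below giving $Q(\Lambda_{s,\sigma_k})\geq k\,Q(\Lambda_{s,\ket{\mathrm{magic}}})$) and your treatment of the generic regime $s^2\not\equiv t^2$ match the paper's proof exactly, including the choice of magic state, input, and the value $c=\tfrac12$. The gap is the balanced case $s\equiv\pm t$, which you correctly identify but do not close, and your primary proposed fix provably cannot work. With an equal-amplitude environment $(\ket{0}+\ket{a})/\sqrt{2}$ and an equal-weight input $\tfrac12(\proj{0}+\proj{c})$, the $A$-side label multiset $\{0,ta,sc,sc+ta\}$ and the $B$-side multiset $\{0,sa,-tc,-tc+sa\}$ are, after dividing by $s=\pm t$, exactly $\{0,a,c,c+a\}$ and $\{0,a,-c,a-c\}$, and the affine map $x\mapsto a-x$ carries one onto the other. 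So any collision on the $B$ side is mirrored by a collision on the $A$ side, the two output states are permutations of each other, and $I_c=0$ for every choice of $a$ and $c$. Varying the labels alone cannot break the symmetry.

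What the paper actually does is break the symmetry in the \emph{amplitudes}, not the labels: it takes the unbalanced environment $\sqrt{2/5}\,\ket{0}+\sqrt{3/5}\,\ket{1}$ (or $\ket{-1}$ for $s\equiv -t$) together with a specific non-maximally-entangled purification $\ket{\psi}_{RA}=\tfrac{\sqrt{6}}{5}\ket{0,0}+\tfrac{3}{5}\ket{0,1}+\sqrt{2/5}\,\ket{1,0}$, computes the spectra of $\tau_A$ and $\tau_B$ explicitly, and obtains $I_c\approx 0.0178>0$; since the only requirement on $d$ is that the handful of kets $0,s,2s,-s$ be distinct (automatic for odd prime $d\geq 5$ with nontrivial $s$), this constant is independent of $d$ and $n$, and one takes $c=\min\{\tfrac12,0.0178\}$. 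Your fallback (ii) via multi-letter coherent information is not carried out and would still leave you to extract a $d$-uniform constant, so as written the proposal does not establish the theorem in the exceptional regime.
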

\begin{proof}

Since the Clifford unitary commutes  with the discrete beam splitter, see Lemma~\ref{lem:key_tech}.5,
then 
\begin{eqnarray}
 Q(\Lambda_{s,\sigma_k})=
  Q(\Lambda_{s,\proj{\text{magic}}^{\ot k}\ot\proj{0}^{n-k}})
  =Q(\ot^k\Lambda_{s,\ket{\text{magic}}}\ot^{n-k}\Lambda_{s,\ket{0}})
  \geq kQ(\Lambda_{s,\ket{\text{magic}}})\;.
\end{eqnarray}
Hence, we only need to consider the single-qudit case.
    Let us first consider the beam splitter  $U_{s,t}$ with $s^2\neq t^2 \mod d$, and single-qudit environment state $\sigma$ to be 
\begin{eqnarray}
    \ket{\sigma}_B=\frac{1}{\sqrt{2}}(\ket{0}_B+\ket{1}_B)\;,
\end{eqnarray}
and 
\begin{eqnarray}
    \ket{\psi}_{RA}=\frac{1}{\sqrt{2}}\left(\ket{0}_R\ket{0}_A+\ket{t^{-1}s}_R\ket{t^{-1}s}_A\right)\;.
\end{eqnarray}
Then we have 
\begin{eqnarray}
\ket{\tau}_{RAB}=U^{AB}_{s,t} \ket{\psi}_{RA}\ot\ket{\sigma}_B
=\frac{1}{2}
(
\ket{0,0,0}_{RAB}+\ket{0,t,s}_{RAB}+\ket{t^{-1}s,t^{-1}s^2,-s}_{RAB}+\ket{t^{-1}s,t^{-1},0}_{RAB}
)\;.
\end{eqnarray}
And thus
\begin{eqnarray}
\tau_{B}=\frac{1}{2}\proj{0}_B+\frac{1}{4}\proj{-s}_B+\frac{1}{4}\proj{s}_B,
\end{eqnarray}
and 
\begin{eqnarray}
    \tau_{A}=\frac{1}{4}\left(\proj{0}_A+\proj{t}_A+\proj{t^{-1}s^2}_A+\proj{t^{-1}}_A\right)\;.
\end{eqnarray}
Hence
\begin{eqnarray}
    I_c(\rho,\Lambda_{s,\sigma})
    =S(\tau_{A})-S(\tau_{RA})=S(\tau_A)-S(\tau_B)=\frac{1}{2}\;.
\end{eqnarray}

Now, let us consider the case where $s^2\equiv t^2\mod d$. This can be split into two cases: 
(a) $s\equiv t\mod d$ and (b) $s\equiv-t\mod d$.

Case (a). If $s\equiv t\mod d$, 
let us consider the environment state
\begin{eqnarray}
    \ket{\sigma}_B=\sqrt{\frac{2}{5}}\ket{0}_B+\sqrt{\frac{3}{5}}\ket{1}_B\;,
\end{eqnarray}
and 
\begin{eqnarray}
    \ket{\psi}_{RA}=\frac{\sqrt{6}}{5}\ket{0,0}_{RA}+\frac{3}{5}\ket{0,1}_{RA}+\sqrt{\frac{2}{5}}\ket{1,0}_{RA}\;.
\end{eqnarray}
Then we have 
\begin{eqnarray}
\ket{\tau}_{RAB}&=&U^{AB}_{s,s} \ket{\psi}_{RA}\ot\ket{\sigma}_B\\
&=&\left(\frac{\sqrt{6}}{5}\ket{0}_R+\sqrt{\frac{2}{5}}\ket{1}_R\right)\ot\left(\sqrt{\frac{2}{5}}\ket{0,0}_{AB}+\sqrt{\frac{3}{5}}\ket{s,s}_{AB}\right)
+\frac{3}{5}\ket{0}_R\ot\left(\sqrt{\frac{2}{5}}\ket{s,-s}_{AB}+\sqrt{\frac{3}{5}}\ket{2s,0}_{AB}\right)\;.
\end{eqnarray}
Thus 
\begin{eqnarray}
    \tau_{B}=\frac{59}{125}\proj{0}_B+\frac{6}{25}\proj{s}_B+\frac{36}{125}\proj{\phi}_B\;,
\end{eqnarray}
where $\ket{\phi}_B=\frac{1}{\sqrt{2}}(\ket{s}_B+\ket{-s}_B)$, with spectrum $\lambda_1=\frac{59}{125},\ \lambda_2=\frac{3(11-\sqrt{61})}{125},\ \lambda_3=\frac{3(11+\sqrt{61})}{125}$, and
\begin{eqnarray}
    \tau_A=\frac{4}{25}\proj{0}_A+\frac{66}{125}\proj{s}_A+\frac{39}{125}\proj{\varphi}_A\;,
\end{eqnarray}
where $\ket{\varphi}_A=\frac{2}{\sqrt{13}}\ket{0}_A+\frac{3}{\sqrt{13}}\ket{2s}_A$,
with spectrum $\mu_1=\frac{66}{125},\ \mu_2=\frac{59+\sqrt{1321}}{250},\mu_3=\frac{59-\sqrt{1321}}{250}$.
Hence, 
\begin{eqnarray}
    I_c(\rho,\Lambda_{s,\sigma})
    =S(\tau_{A})-S(\tau_{RA})=S(\tau_A)-S(\tau_B)\approx0.0178.
\end{eqnarray}

Case (b). If $s\equiv -t\mod d$, 
 consider the environment state
\begin{eqnarray}
    \ket{\sigma}_B=\sqrt{\frac{2}{5}}\ket{0}_B+\sqrt{\frac{3}{5}}\ket{-1}_B\;,
\end{eqnarray}
and 
\begin{eqnarray}
    \ket{\psi}_{RA}=\frac{\sqrt{6}}{5}\ket{0,0}_{RA}+\frac{3}{5}\ket{0,1}_{RA}+\sqrt{\frac{2}{5}}\ket{1,0}_{RA}\;.
\end{eqnarray}
Then we have 
\begin{eqnarray}
\ket{\tau}_{RAB}&=&U^{AB}_{s,-s} \ket{\psi}_{RA}\ot\ket{\sigma}_B\\
&=&\left(\frac{\sqrt{6}}{5}\ket{0}_R+\sqrt{\frac{2}{5}}\ket{1}_R\right)\ot\left(\sqrt{\frac{2}{5}}\ket{0,0}_{AB}+\sqrt{\frac{3}{5}}\ket{s,-s}_{AB}\right)
+\frac{3}{5}\ket{0}_R\ot\left(\sqrt{\frac{2}{5}}\ket{s,s}_{AB}+\sqrt{\frac{3}{5}}\ket{2s,0}_{AB}\right)\;.
\end{eqnarray}
Then the proof is the same as the Case (a).
\end{proof}

\begin{thm}[Restatement of Theorem 4]
Given nontrivial $s,t\in \mathbb{Z}_d$ with $s^2+t^2\equiv 1\mod d$, and the fixed environment state $\sigma$, the quantum capacity $Q(\Lambda_{s,\sigma})$ is bounded by the amount of magic resource in $\sigma$ as follows,
\begin{align}
    Q(\Lambda_{\sigma})\leq MRM(\sigma),
\end{align}
where $MRM(\sigma)$ is the modified relative entropy of magic.
\end{thm}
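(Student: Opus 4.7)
The plan is to reduce the theorem to the single-letter inequality $I_c(\rho,\Lambda_\sigma)\le D(\sigma\|\tau)$, valid for every input $\rho$ and every stabilizer state $\tau$. Since $\Lambda_\sigma^{\otimes N}=\Lambda_{\sigma^{\otimes N}}$, stabilizer states are closed under tensor products, and $MRM$ is subadditive on product states (by taking the reference state to be $\tau^{\otimes N}$), such a single-letter bound would immediately imply
\[ Q(\Lambda_\sigma)\;=\;\lim_{N\to\infty}\frac{Q^{(1)}(\Lambda_{\sigma^{\otimes N}})}{N}\;\le\;\lim_{N\to\infty}\frac{MRM(\sigma^{\otimes N})}{N}\;\le\; MRM(\sigma), \]
after optimizing the single-letter bound over $\tau\in STAB$.

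The key structural input, already used in the proof of Theorem~\ref{thm:main_1}, is that for any stabilizer environment $\tau$ the channel $\Lambda_\tau$ is entanglement-breaking. The explicit measure-and-prepare form $\Lambda_\tau(\rho)=\sum_{\vec a}\langle\varphi_{\vec a}|\rho|\varphi_{\vec a}\rangle\,|\varphi_{t\vec a}\rangle\langle\varphi_{t\vec a}|$ shows that $\omega^\tau:=(I_R\otimes\Lambda_\tau)(|\Psi\rangle\langle\Psi|_{RA})$ is separable across $R{:}A$ for every purification $|\Psi\rangle_{RA}$ of $\rho$. Because $\Lambda_\eta$ is linear in the environment $\eta$, the map $\eta\mapsto(I_R\otimes\Lambda_\eta)(|\Psi\rangle\langle\Psi|)$ is CPTP, and the data-processing inequality for the quantum relative entropy yields $D(\omega^\sigma\|\omega^\tau)\le D(\sigma\|\tau)$, where $\omega^\sigma:=(I_R\otimes\Lambda_\sigma)(|\Psi\rangle\langle\Psi|)$.

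To turn this into a coherent-information bound I would invoke the inequality $I_c(\omega)\le D(\omega\|\omega^{\mathrm{sep}})$, valid for any bipartite state $\omega$ and any separable reference $\omega^{\mathrm{sep}}$. This is essentially the statement that the relative entropy of entanglement upper bounds the coherent information, obtained via the chain $I_c\le E_D\le E_R$ (the first inequality by Devetak--Winter hashing, the second by the standard monotonicity of $E_R$ under LOCC). Applying it with $\omega=\omega^\sigma$ and $\omega^{\mathrm{sep}}=\omega^\tau$ and chaining with the previous data-processing estimate yields $I_c(\rho,\Lambda_\sigma)\le D(\sigma\|\tau)$. Minimizing over $\tau\in STAB$ and regularizing as in the first paragraph completes the proof.

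The main obstacle is the single-letter inequality $I_c(\omega)\le D(\omega\|\omega^{\mathrm{sep}})$ for separable $\omega^{\mathrm{sep}}$: the naive approach that bounds entropy differences $S(\omega_A)-S(\omega^{\mathrm{sep}}_A)$ and $S(\omega^{\mathrm{sep}})-S(\omega)$ separately by relative entropies fails, since in general $S(\rho_1)-S(\rho_2)\not\le D(\rho_1\|\rho_2)$. The correct argument exploits the decomposition $\omega^{\mathrm{sep}}=\sum_k p_k\,\alpha_k\otimes\beta_k$, the operator convexity of $-\log$, and the joint convexity of the relative entropy to control $\mathrm{Tr}[\omega\log\omega^{\mathrm{sep}}]$; this is the main technical step in the proof.
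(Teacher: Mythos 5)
Your proof is correct, but it takes a genuinely different route from the paper's. The paper works entirely inside its convolution framework: it invokes the closed form $MRM(\sigma)=S(\mathcal{M}(\sigma))-S(\sigma)$, uses Clifford covariance to bring $\sigma$ to the normal form $\proj{0}^{\ot r}\ot\sigma_E$, block-decomposes $\rho'\boxtimes(\proj{0}^{\ot r}\ot\sigma_E)$ and its complement over the dephasing basis, and then bounds the resulting entropy difference by $\log d^{|E|}-S(\sigma_E)$ using the convolution-specific inequality $S(\rho\,\tilde{\boxtimes}\,\sigma)\geq S(\sigma)$. You instead exploit that $\Lambda_\tau$ is entanglement-breaking for every stabilizer $\tau$ (which is indeed implicit in the paper's proof of Theorem~\ref{thm:main_1}), apply data processing to the CPTP map $\eta\mapsto (I\ot\Lambda_\eta)(\proj{\Psi_{RA}})$, and finish with the standard chain $I(R\rangle A)_\omega\leq E_D(\omega)\leq E_R(\omega)\leq D(\omega\|\omega^{\mathrm{sep}})$; the regularization step is handled the same way in both arguments (product reference states and $\Lambda_{s,\sigma}^{\ot N}=\Lambda_{s,\sigma^{\ot N}}$). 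Your approach is more modular --- it applies to any environment-parametrized channel whose free-environment instances are entanglement-breaking, and it directly yields the bound $\min_{\tau\in STAB}D(\sigma\|\tau)$ over the full convex hull, which is a priori at least as tight as the paper's mean-state expression --- while the paper's argument is self-contained and exhibits the bound explicitly as an entropy gap. One remark on the step you flag as the main obstacle: the inequality $S(\omega_A)-S(\omega_{RA})\leq D(\omega\|\omega^{\mathrm{sep}})$ is a known fact and has a short direct proof that does not require distillable entanglement at all --- for separable $\omega^{\mathrm{sep}}=\sum_k p_k\,\alpha_k\ot\beta_k$ one has $\omega^{\mathrm{sep}}\leq I_R\ot\omega^{\mathrm{sep}}_A$, so operator \emph{monotonicity} (not convexity) of the logarithm gives $\Tr{\omega\log\omega^{\mathrm{sep}}}\leq\Tr{\omega_A\log\omega^{\mathrm{sep}}_A}\leq -S(\omega_A)$, which is exactly what is needed; your suggested route via operator convexity of $-\log$ produces a bound in the wrong direction, so you should replace it with this monotonicity argument or simply cite the hashing/$E_R$ chain you already named.
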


\begin{proof}

Since the modified relative entropy of magic $MRM(\sigma)=S(\mathcal{M}(\rho))-S(\rho)$ (See Theorem 4 in \cite{BGJ23a} or Lemma 37 in \cite{BGJ23c}), 
we only need to  prove the following result 
\begin{align}\label{ineq:key1}
\max_{\rho}S(\Lambda_{s,\sigma}(\rho))
    -S(\Lambda^c_{s,\sigma}(\rho))
    \leq MRM(\sigma)=S(\mathcal{M}(\sigma))-S(\sigma).
\end{align}
By using the equivalent definition in \eqref{eq:conv_B} and \eqref{eq:conv_C}, the above statement is equivalent to proving 
\begin{align}
    \max_{\rho}S( \rho \boxtimes\sigma )
    -S( \rho \tilde{\boxtimes} \sigma)\leq S(\mathcal{M}(\sigma))-S(\sigma).
\end{align}
Let us consider the stabilizer group $G_{\rho}$ of the mean state $\mathcal{M}(\sigma)$ with $r$ generator, which is equivalent to 
$\set{Z_1,...,Z_r}$ up to a Clifford unitary $U$. Hence, $\sigma=U(\proj{0}^{\ot r}\ot \sigma_E)U^\dag$, where
$E=\set{r+1, r+2, ..., n}$, Then 
\begin{align}
    \mathcal{M}(\sigma)
    =U\mathcal{M}(\proj{0}^{\ot r}\ot \sigma_E)U^\dag
    =U(\proj{0}^{\ot r}\ot \frac{I_E}{d^{n-r}})U^\dag,
\end{align}
and $S(\sigma)=S(\sigma_E)$.
Hence, $S(\mathcal{M}(\sigma))=\log d^{n-r}$.

Based on the invariance of quantum entropy under unitary, we have the following two equalities,  
\begin{align}
    S(\rho\boxtimes\sigma)=S(\rho\boxtimes (U\proj{0}^{\ot r}\ot \sigma_EU^\dag))
    =&S((U^\dag \rho U)\boxtimes (\proj{0}^{\ot r}\ot \sigma_E))=S(\rho'\boxtimes (\proj{0}^{\ot r}\ot \sigma_E)),\\
       S(\rho\tilde{\boxtimes}\sigma)=S(\rho\tilde{\boxtimes} (U\proj{0}^{\ot r}\ot \sigma_EU^\dag))
    =&S(U^\dag \rho U\tilde{\boxtimes} (\proj{0}^{\ot r}\ot \sigma_E))=S(\rho'\tilde{\boxtimes} (\proj{0}^{\ot r}\ot \sigma_E)),
\end{align}
where $\rho'=U^\dag \rho U$, $U$ is a Clifford unitary, and the second equality comes from the commutativity with Clifford unitaries in Lemma~\ref{lem:key_tech} (See  Lemma 85  in~\cite{BGJ23b}). 

In addition,  we have
\begin{align}
    \rho'\boxtimes (\proj{0}^{\ot r}\ot \sigma_E)
    =\sum_{\vec x}p_{\vec x} (\proj{\vec x}\boxtimes\proj{\vec 0}^{\ot r})\ot (\rho^E_{\vec x}\boxtimes\sigma_E),
\end{align}
where $p_{\vec x} =\Tr{\proj{\vec x}\ot I_E\rho'}$, and 
$\rho^E_{\vec x}=\bra{\vec x}\rho'\ket{\vec x}/p_{\vec x}$.
Similarly, 
\begin{align}
    \rho'\tilde{\boxtimes} (\proj{0}^{\ot r}\ot \sigma_E)
    =\sum_{\vec x}p_{\vec x} (\proj{\vec x}\tilde{\boxtimes}\proj{\vec 0}^{\ot r})\ot (\rho^E_{\vec x}\tilde{\boxtimes}\sigma_E).
\end{align}
Therefore, 
\begin{align}
  S(\rho\boxtimes\sigma)
-S(\rho\tilde{\boxtimes}\sigma)=&S(\rho'\boxtimes (\proj{0}^{\ot r}\ot \sigma_E))
-S(  \rho'\tilde{\boxtimes} (\proj{0}^{\ot r}\ot \sigma_E))\\
=&S(\vec p)+\sum_{\vec x}p_{\vec x}S(\rho^E_{\vec x}\boxtimes\sigma_E)
-(S(\vec p)+\sum_{\vec x}p_{\vec x}S(\rho^E_{\vec x}\tilde{\boxtimes}\sigma_E)\\
=&\sum_{\vec x}p_{\vec x}S(\rho^E_{\vec x}\boxtimes\sigma_E)
-\sum_{\vec x}p_{\vec x}S(\rho^E_{\vec x}\tilde{\boxtimes}\sigma_E)\\
\leq& \log d^{|E|}-S(\sigma_E)
=S(\mathcal{M}(\sigma))-S(\sigma),
\end{align}
where the inequality comes from the following inequalities
\begin{align}
    \label{ineq:ap1}S(\rho^E_{\vec x}\boxtimes\sigma_E)\leq& \log d^{|E|},\\
   \label{ineq:ap2} S(\rho^E_{\vec x}\tilde{\boxtimes}\sigma_E)\geq& S(\sigma_E).
\end{align}
Here, \eqref{ineq:ap1} come from the fact $\rho^E_{\vec x}\boxtimes\sigma_E$ is supported on $E$, 
and \eqref{ineq:ap2} comes from 
the entropy inequality $S(\rho\tilde{\boxtimes}\sigma)\geq \max\set{S(\rho), S(\sigma)}$
(See Theorem 14 in~\cite{BGJ23a}).
\end{proof}

\section{Discrete beam splitter with $s^2\equiv t^2\mod d$}\label{appen:s=t}
Let us define the quantum operation 
$\mathcal{A}$ as 
\begin{eqnarray}
    \mathcal{A}(\sigma)=A\sigma A^\dag\;.
\end{eqnarray}
Then discrere phase space inverse symmetry can be rewritten as 
\begin{eqnarray}
    \mathcal{A}(\sigma)=\sigma\;.
\end{eqnarray}

\begin{figure}[h]
  \center{\includegraphics[width=15cm]  {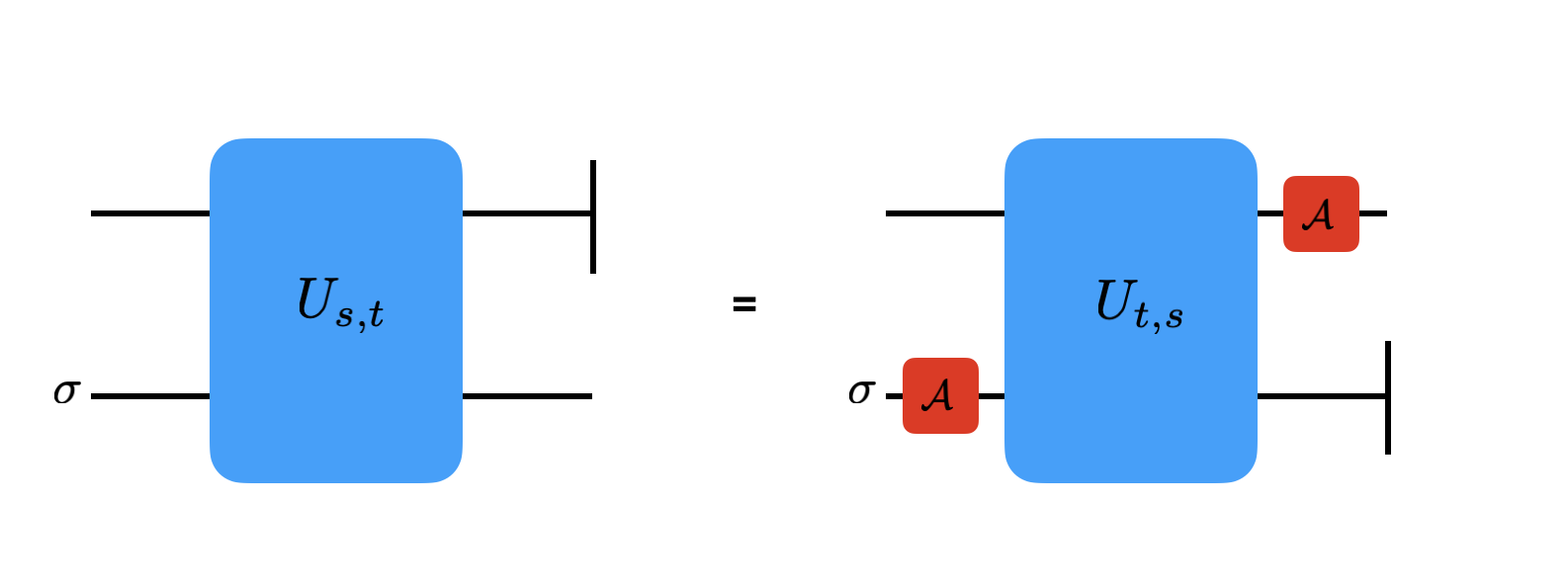}}     
  \caption{The diagram to show the equivalence of $\Lambda^c_{s,\sigma}$ and $
    \mathcal{A}\circ\Lambda_{t, \mathcal{A}(\sigma)}$.}
  \label{fig3}
 \end{figure}

\begin{lem}\label{lem:equ}
Given  $s^2+t^2\equiv 1 \mod d$, we have the relation
\begin{eqnarray}
    \Lambda^c_{s,\sigma}
    =\mathcal{A}\circ\Lambda_{t, \mathcal{A}(\sigma)}\;,
\end{eqnarray}
for any state $\sigma$ (See Fig. \ref{fig3}).
\end{lem}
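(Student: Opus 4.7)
The plan is to prove the operator identity by showing that the two sides have the same characteristic function on the output system, then invoking the fact that the Weyl operators form an orthonormal basis (so an operator is determined by its characteristic function). Concretely, it suffices to verify $\Xi_{\Lambda^c_{s,\sigma}(\rho)}(\vec y) = \Xi_{\mathcal{A}\circ\Lambda_{t,\mathcal{A}(\sigma)}(\rho)}(\vec y)$ for every input state $\rho$ and every phase-space point $\vec y\in V^n$.

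For the left-hand side, I would first invert the Weyl-conjugation rule of Proposition~35. Using $s^2+t^2\equiv 1\mod d$, the relation $U_{s,t}(w(\vec a)\otimes w(\vec b))U^\dag_{s,t}=w(s\vec a+t\vec b)\otimes w(-t\vec a+s\vec b)$ inverts to
\begin{equation*}
U^\dag_{s,t}(w(\vec c)\otimes w(\vec d))U_{s,t}=w(s\vec c-t\vec d)\otimes w(t\vec c+s\vec d).
\end{equation*}
Setting $\vec c=\vec 0$, $\vec d=-\vec y$, and cycling $U_{s,t}$ under the trace then yields
\begin{equation*}
\Xi_{\Lambda^c_{s,\sigma}(\rho)}(\vec y)=\Xi_\rho(-t\vec y)\,\Xi_\sigma(s\vec y).
\end{equation*}

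For the right-hand side, I would invoke the reflection property $Aw(\vec x)A^\dag=w(-\vec x)$ established in the appendix just before the lemma; this immediately gives $\Xi_{\mathcal{A}(\tau)}(\vec y)=\Xi_\tau(-\vec y)$ for any operator $\tau$. Applying the convolution--multiplication duality of Lemma~\ref{lem:key_tech}(1) to $\Lambda_{t,\mathcal{A}(\sigma)}(\rho)=\rho\boxtimes_{t,s}\mathcal{A}(\sigma)$ (observing that the two beam-splitter parameters are now swapped relative to $\Lambda_{s,\cdot}$) produces $\Xi_{\Lambda_{t,\mathcal{A}(\sigma)}(\rho)}(\vec y)=\Xi_\rho(t\vec y)\,\Xi_\sigma(-s\vec y)$, and composing with the outer $\mathcal{A}$ flips the sign of $\vec y$ once more to reproduce $\Xi_\rho(-t\vec y)\,\Xi_\sigma(s\vec y)$, matching the LHS.

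The main obstacle is not conceptual but combinatorial: one must keep straight the sign flips that each $\mathcal{A}$ contributes via the reflection property, together with the swap of roles between $s$ and $t$ when passing from $U_{s,t}$ to $U_{t,s}$ (both are legitimate beam splitters since $s^2+t^2\equiv 1\mod d$ is symmetric in the parameters, but they are distinct unitaries). Once the inverted Weyl rule and the two sign flips are assembled correctly, the identity collapses to a direct match of two products of characteristic functions, and the lemma follows.
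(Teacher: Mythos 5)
Your proposal is correct and follows essentially the same route as the paper: both arguments reduce the channel identity to an equality of characteristic functions, compute $\Xi_{\Lambda^c_{s,\sigma}(\rho)}(\vec x)=\Xi_\rho(-t\vec x)\Xi_\sigma(s\vec x)$ via the (complementary) convolution--multiplication duality, and use $Aw(\vec x)A^\dag=w(-\vec x)$ to cancel the two sign flips coming from the inner and outer $\mathcal{A}$. The only difference is that you derive the complementary-channel formula explicitly by inverting the Weyl conjugation rule, whereas the paper simply cites the duality; the substance is identical.
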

\begin{proof}
Since the characteristic function $\Xi$ provides a complete description of the states and channels, we only need to  consider the characteristic function $  \Xi_{\Lambda^c_{s,\sigma}(\rho)}$ and $ \Xi_{\mathcal{A}\circ\Lambda_{t, \mathcal{A}(\sigma)}(\rho)}$ for any
 input state $\rho$.
Based on the  convolution-multiplication duality in~\cite{BGJ23a} stated here as Lemma~\ref{lem:key_tech}, we have 
    \begin{eqnarray}
    \Xi_{\Lambda^c_{s,\sigma}(\rho)}(\vec x)=\Xi_{\rho}(-t\vec x)\Xi_{\sigma}(s\vec x)\;,
    \end{eqnarray}
    and 
     \begin{eqnarray}
    \Xi_{\mathcal{A}\circ\Lambda_{t, \mathcal{A}(\sigma)}}(\vec x)=\Xi_{\rho}(-t\vec x)\Xi_{\mathcal{A}(\sigma)}(-s\vec x)
    =\Xi_{\rho}(-t\vec x)\Xi_{\sigma}(s\vec x)\;.
    \end{eqnarray}
Thus, we have $\Lambda^c_{s,\sigma}(\rho)
    =\mathcal{A}\circ\Lambda_{t, \mathcal{A}(\sigma)}(\rho)$ for any input state $\rho$, i.e., 
    $\Lambda^c_{s,\sigma}
    =\mathcal{A}\circ\Lambda_{t, \mathcal{A}(\sigma)}$.
\end{proof}

\begin{thm}[Restatement of Theorem~5]
Let $\sigma$ be an $n$-qudit state, which is some convex combination of $w(\vec a)\sigma_{\vec a} w(\vec a)^\dag$, with each state $\sigma_{\vec a}$ having discrete phase space inverse symmetry.
    Then $\Lambda_{s,\sigma}$ is anti-degradable for $s\equiv t\mod d$,  and
    \begin{eqnarray}
        Q(\Lambda_{s,\sigma})=0\;.
    \end{eqnarray}
\end{thm}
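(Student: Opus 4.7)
The plan is to combine Lemma~\ref{lem:equ} with the displacement covariance of the discrete beam splitter and the involution $A^2=I$, showing that each extremal component in the convex decomposition of $\sigma$ yields an anti-degradable channel, and then lifting this to the mixture via a classical-flag extension of the Stinespring complement.

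\emph{Step 1 (symmetric environment, with displacement).} For $\sigma_0$ satisfying $\mathcal{A}(\sigma_0)=\sigma_0$, Lemma~\ref{lem:equ} combined with $s\equiv t\pmod{d}$ gives $\Lambda^c_{s,\sigma_0}=\mathcal{A}\circ\Lambda_{t,\mathcal{A}(\sigma_0)}=\mathcal{A}\circ\Lambda_{s,\sigma_0}$. A direct computation from $A=\sum_{\vec x}\ket{-\vec x}\bra{\vec x}$ shows $A^\dag=A$ and $A^2=I$, so $\mathcal{A}$ is a CPTP involution; composing on the left then yields $\Lambda_{s,\sigma_0}=\mathcal{A}\circ\Lambda^c_{s,\sigma_0}$, i.e.\ $\Lambda_{s,\sigma_0}$ is anti-degradable. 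Displacing $\sigma_0\mapsto w(\vec a)\sigma_0 w(\vec a)^\dag$ only conjugates $\Lambda$ and $\Lambda^c$ by $w(t\vec a)$ and $w(s\vec a)$ respectively (via the covariance $U_{s,t}(w(\vec a)\otimes w(\vec b))U^\dag_{s,t}=w(s\vec a+t\vec b)\otimes w(-t\vec a+s\vec b)$); under $s\equiv t\pmod{d}$ these two conjugating factors coincide, so the degrading relation just absorbs the same unitary on both sides and remains CPTP.

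\emph{Step 2 (convex mixture).} To extend to $\sigma=\sum_ip_i w(\vec a_i)\sigma_{\vec a_i}w(\vec a_i)^\dag$, I would adjoin a classical flag register $F$ and work with the Stinespring complement
\[
\tilde\Lambda^c(\rho)=\sum_i p_i\,\Lambda^c_{s,w(\vec a_i)\sigma_{\vec a_i}w(\vec a_i)^\dag}(\rho)\otimes\proj{i}_F,
\]
which arises as the complement of $\Lambda_{s,\sigma}$ with respect to a purification of $\sigma$ that threads the flag through $F$ (after dephasing $F$ in the $\ket{i}$ basis, a CPTP step). A classically-controlled unitary that reads $F$ and applies the Step~1 degrading map for the matching value of $i$, followed by tracing out $F$, defines a CPTP $\Gamma$ with $\Gamma\circ\tilde\Lambda^c=\Lambda_{s,\sigma}$. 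Since any two Stinespring complements of a channel differ only by an isometry on the environment, this establishes anti-degradability of $\Lambda_{s,\sigma}$ in the standard sense.

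\emph{Step 3 (capacity and regularization).} Anti-degradable channels have nonpositive coherent information on every input (the standard no-cloning obstruction), and anti-degradability is preserved under tensor powers, so $Q^{(1)}(\Lambda_{s,\sigma}^{\otimes N})\le 0$ for every $N$, giving $Q(\Lambda_{s,\sigma})=0$. If only the capacity conclusion is wanted, the flag construction can be bypassed: Lemma~\ref{lem:convex_Coh} (convexity of coherent information) bounds $I_c(\rho,\Lambda_{s,\sigma})$ by a convex combination of nonpositive coherent informations for the extremal components, and the same reduction applies to $\sigma^{\otimes N}$ because $A$ factorizes across tensor factors so that tensor products of phase-space-inverse-symmetric states are themselves symmetric. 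The main obstacle I anticipate is Step~2: a convex combination of anti-degradable channels is not automatically anti-degradable against the bare sum complement $\sum_i p_i\Lambda^c_i$ used in the main text, and the argument crucially requires the enlarged Stinespring complement that carries the classical label $i$.
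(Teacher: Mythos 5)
Your proof is correct and follows essentially the same route as the paper: Lemma~\ref{lem:equ} combined with the displacement covariance of $U_{s,t}$ and the condition $s\equiv t\bmod d$, with the involution $\mathcal{A}$ (and a residual Weyl conjugation $D_{-2s\vec a}$) serving as the unitary connecting $\Lambda_{s,\sigma}$ and $\Lambda^c_{s,\sigma}$. In fact you are more careful than the paper on two points it leaves implicit: you invert the relation to exhibit anti-degradability directly (the paper's displayed identity $\Lambda^c_{s,\sigma}=\mathcal{A}\circ D_{-2s\vec a}\circ\Lambda_{s,\sigma}$ is literally the degradability direction, and one must use that the post-processing is unitary to get the anti-degradability needed for $Q=0$), and your flag-register Stinespring argument in Step~2 supplies the justification for reducing to a single extremal component, which the paper asserts without proof and which, as you note, can alternatively be bypassed for the capacity statement alone via Lemma~\ref{lem:convex_Coh}.
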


\begin{proof}
    We only need to consider the case where $\sigma=w(\vec a)\sigma_0w(\vec a)$ where $\sigma_0=\mathcal{A}(\sigma_0)$.
By Lemma \ref{lem:equ}, we have 
\begin{eqnarray}
    \Lambda^c_{s,\sigma}
    =\mathcal{A}\circ\Lambda_{t,\mathcal{A}(\sigma)}\;.
\end{eqnarray}
Moreover,
\begin{eqnarray}
    \mathcal{A}(\sigma)
    =\mathcal{A}(w(\vec a))\sigma_0\mathcal{A}(w(\vec a)^\dag)
    =w(-2\vec a)\sigma w(2\vec a)\;.
\end{eqnarray}
Let us denote $D_{\vec a}(\sigma)=w(\vec a)\sigma w(\vec a)^\dag$.
Hence $\mathcal{A}(D_{\vec a}(\sigma))=D_{-2\vec a}(\sigma)$. 
Besides, we have  
\begin{eqnarray}
    \rho\boxtimes_{t,s}D_{\vec a}(\sigma)=D_{s\vec a}(\rho\boxtimes_{t,s}\sigma)\;,
\end{eqnarray}
based on the following property(See Proposition 41 in~\cite{BGJ23b})
\begin{align*}
\Lambda(w(\vec x)\ot w(\vec y)\rho_{AB}w(\vec x)^\dag\ot w(\vec y)^\dag)
=w(s\vec x+t\vec y)\rho_{AB}w(s\vec x+t\vec y)^\dag\;,
\end{align*}
where $\Lambda(\rho_{I,II})=\Ptr{II}{U_{s,t}\rho_{I, II}U^\dag_{s,t}}$. 
Since $\rho\boxtimes_{t,s}D_{\vec a}(\sigma)=D_{s\vec a}(\rho\boxtimes_{t,s}\sigma)$, then 
\begin{eqnarray}
    \Lambda_{t, D_{\vec a}(\sigma)}=
    D_{s\vec a}\circ \Lambda_{t,\sigma}\;,
\end{eqnarray}
which implies that 
\begin{eqnarray}
    \Lambda^c_{s,\sigma}
    =\mathcal{T}\circ D_{-2s\vec a}\circ\Lambda_{t,\sigma}.
\end{eqnarray}
If $s\equiv t\mod d$, then $\Lambda_{t,\sigma}=\Lambda_{s,\sigma}$, and thus 
$\Lambda_{s,\sigma}$ is degradable.
\end{proof}

\section{Entanglement fidelity of the discrete beam splitter}

 Let us consider the encoding  $\mathcal{E}_K:\mathcal{H}_S\to \mathcal{H}_A$ and the decoding $\mathcal{D}_K:\mathcal{H}_A\to \mathcal{H}_S$, where $\mathcal{H}_S=\complex^K$ and  $\mathcal{H}_A=(\complex^d)^{\ot n}$.   
Entanglement fidelity is defined as 
\begin{align}
    F_e(\mathcal{E}_K, \mathcal{D}_K, \sigma)
    =\bra{\Phi}\mathcal{D}_K\circ\Lambda_{\sigma}\circ\mathcal{E}_K(\proj{\Phi})\ket{\Phi},
\end{align}
 where $\ket{\Phi}=\frac{1}{K}\sum_{i\in K}\ket{i}_S\ket{i}_R$ being maximally entangled state on $\mathcal{H}_S\ot\mathcal{H}_R$.
We explore how entanglement fidelity varies with different environment state $\sigma$.

\begin{prop}\label{prop:main1}
 The maximal entanglement fidelity over all stabilizer states$\sigma$, encoding 
 $\mathcal{E}_K$ and decoding $\mathcal{D}_K$ is
\begin{align}
\max_{\sigma\in STAB}
\max_{\mathcal{E}_K} \max_{\mathcal{D}_K}   F_e(\mathcal{E}_K, \mathcal{D}_K, \sigma)
=\frac{1}{K}.
\end{align}
\end{prop}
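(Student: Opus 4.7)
The plan is to exploit the structure of $\Lambda_{s,\sigma}$ for pure stabilizer $\sigma$ already uncovered in the proof of Theorem~\ref{thm:1_APP}: there is an orthonormal eigenbasis $\{\ket{\varphi_{\vec a}}\}_{\vec a\in \mathbb{Z}_d^n}$ of the stabilizer group $G_\sigma$ such that
\begin{align*}
    \Lambda_{s,\sigma}(\rho) = \sum_{\vec a}\bra{\varphi_{\vec a}}\rho\ket{\varphi_{\vec a}}\,\proj{\varphi_{t\vec a}}.
\end{align*}
Thus $\Lambda_{s,\sigma}$ is \emph{measure-and-prepare}, hence entanglement-breaking. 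Because $\sigma\mapsto F_e(\mathcal{E}_K,\mathcal{D}_K,\sigma)$ is linear in $\sigma$ and $STAB$ is the convex hull of pure stabilizer states, the maximum over $STAB$ is attained on a pure stabilizer state, so it suffices to prove $F_e(\mathcal{E}_K,\mathcal{D}_K,\sigma)\leq 1/K$ for pure stabilizer $\sigma$.

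For the upper bound, first apply $I_R \otimes \mathcal{E}_K$ to $\proj{\Phi}_{RS}$ to obtain a state $\rho_{RA}$ on $RA$. Applying $I_R\otimes \Lambda_{s,\sigma}$ to $\rho_{RA}$ and using the measure-and-prepare form above yields a classical-quantum (hence separable) state of the form $\sum_{\vec a} \eta^R_{\vec a}\otimes \proj{\varphi_{t\vec a}}_A$, where $\eta^R_{\vec a} = \bra{\varphi_{\vec a}}_A \rho_{RA}\ket{\varphi_{\vec a}}_A \geq 0$. The decoding $I_R\otimes \mathcal{D}_K$ is a local operation, so the resulting state $\omega_{RS}$ is still separable across the $R{:}S$ cut. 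The singlet fraction of any separable state on $\mathcal{H}_R\otimes\mathcal{H}_S$ with $\dim \mathcal{H}_S = K$ is at most $1/K$ (an easy computation using $\bra{\Phi}\rho_R\otimes \tau_S\ket{\Phi}=\tfrac{1}{K}\Tr{\rho\tau^T}\leq \tfrac{1}{K}$ for states $\rho,\tau$, extended to convex combinations), which gives $\bra{\Phi}\omega_{RS}\ket{\Phi}\leq 1/K$.

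For achievability, choose $\sigma=\proj{0}^{\otimes n}$, whose stabilizer eigenbasis is the computational basis, so by the proof of Theorem~\ref{thm:1_APP} the channel acts as $\Lambda_{s,\sigma}:\proj{\vec a}\mapsto \proj{t\vec a}$. Use the encoding $\mathcal{E}_K:\ket{i}_S\mapsto\ket{i}_A$ for $i=0,\dots,K-1$ (possible when $K\leq d^n$) and a decoding $\mathcal{D}_K$ whose action on the computational basis states of $A$ is $\proj{ti}_A\mapsto\proj{i}_S$. A short direct calculation then gives the final state on $RS$ as $\frac{1}{K}\sum_{i=0}^{K-1}\proj{i}_R\otimes \proj{i}_S$, which has singlet fraction exactly $1/K$. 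The only non-trivial ingredient in this plan is the entanglement-breaking characterization of $\Lambda_{s,\sigma}$ for pure stabilizer $\sigma$, but this is already essentially established in the proof of Theorem~\ref{thm:1_APP} through the identity $\Lambda_{s,\sigma}=\Lambda_{s,\sigma}\circ\Delta$, where $\Delta$ is full dephasing onto $\{\ket{\varphi_{\vec a}}\}$; the remaining pieces---linearity in $\sigma$, preservation of separability under local operations, and the $1/K$ singlet-fraction bound for separable states---are standard, so I expect no serious obstacle beyond assembling these.
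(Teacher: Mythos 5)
Your proposal is correct and follows essentially the same route as the paper: reduce to pure stabilizer $\sigma$ by linearity, use the identity $\Lambda_{s,\sigma}=\Lambda_{s,\sigma}\circ\Delta$ from the proof of Theorem~\ref{thm:1_APP} to see that the output across $R{:}A$ is separable, bound the overlap with $\ket{\Phi}$ by $1/K$ via $\bra{\Phi}\rho\otimes\tau\ket{\Phi}\leq 1/K$, and achieve the bound with $\sigma=\proj{\vec 0}$ and a computational-basis encoding (your $\ket{i}_S\mapsto\ket{i}_A$ is the special case $\vec x_i=i$ of the paper's construction, and the $t\vec a$ versus $s\vec a$ labeling of the output basis is immaterial since the decoder inverts whichever permutation occurs). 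The only cosmetic difference is that you phrase the upper bound through the entanglement-breaking/separable-singlet-fraction language rather than the paper's direct trace computation.
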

\begin{proof}
The proof is similar to that of Theorem \ref{thm:1_APP}.
First, since $\Lambda_{\lambda\sigma_1+(1-\lambda)\sigma_2}(\rho)=\lambda\Lambda_{\sigma_1}(\rho)+(1-\lambda)\Lambda_{\sigma_2}$, 
then by the linearity, we get 
\begin{align}
    F_e(\mathcal{E}_K, \mathcal{D}_K, \lambda\sigma_1+(1-\lambda)\sigma_2)
    =  \lambda F_e(\mathcal{E}_K, \mathcal{D}_K,\sigma_1)
    + (1-\lambda)F_e(\mathcal{E}_K, \mathcal{D}_K, \sigma_2).
\end{align}
Hence, we only consider the case where $\sigma$ is pure 
stabilizer state.

For pure stabilizer state $\sigma$, 
 there exists some abelian group of Weyl operators with $n$ generators $G_{\sigma}:=\set{w(\vec x): w(\vec x)\sigma=\sigma}$.
By a simple calculation, $\Delta(\rho):=\frac{1}{d^n}\sum_{w(\vec x)\in G_{\sigma}}w(\vec x)\rho w(\vec x)^\dag$ is the full-dephasing channel, and there exists 
an orthonormal basis $\ket{\varphi_{\vec a}}$ such that 
$$w(\vec x)\ket{\varphi_{\vec a}}=\omega^{\vec x\cdot\vec a}_d\ket{\varphi_{\vec a}}, $$ for any $w(\vec x)\in G_{\sigma}$, and 
$$\Delta(\rho)=\frac{1}{d^n}\sum_{\vec a\in \mathbb{Z}^n_d}\bra{\varphi_{\vec a}}\rho\ket{\varphi_{\vec a}}\proj{\varphi_{\vec a}}.$$ Note that, $\sigma=\proj{\varphi_{\vec 0}}$. Let us denote $p_{\vec a}=\bra{\varphi_{\vec a}}\rho\ket{\varphi_{\vec a}}$.
Then, we have 
\begin{eqnarray}
\Lambda_{\sigma}(\rho)=\Lambda(\rho\ot\sigma)
=\frac{1}{d^n}\sum_{w(\vec x)\in G_{\sigma}}\Lambda(\rho\ot w(\vec x)\sigma w(\vec x)^\dag)
=\Lambda\left(\frac{1}{d^n}\sum_{w(\vec x)\in G_{\sigma}}w(\vec x)\rho w(\vec x)^\dag\ot \sigma \right)
=\Lambda_{\sigma}(\Delta(\rho)),
\end{eqnarray}
where the second equality comes from the definition of $G_{\sigma}$, and 
the 
third equality comes from the following property (proved as Proposition 41 in\cite{BGJ23b}), namely 
\begin{align*}
\Lambda(w(\vec x)\ot w(\vec y)\rho_{AB}w(\vec x)^\dag\ot w(\vec y)^\dag)
=w(s\vec x+t\vec y)\rho_{AB}w(s\vec x+t\vec y)^\dag.
\end{align*}

Hence, 
\begin{align}
    \Lambda_{\sigma}\circ \mathcal{E}(\proj{\Phi}_{SR})
    =\Lambda_{\sigma}(\Delta_A (\proj{\rho_{\mathcal{E}}}_{SR}))
    =\sum_{\vec a}p_{\vec a}\Lambda_{\sigma}(\proj{\varphi_{\vec a}})_A\ot\tau^R_{\vec a}.
\end{align}
where $\rho^{\mathcal{E}}_{AR}=\mathcal{E}(\proj{\Phi}_{SR})$ is the encoded state of $\ket{\Phi}_{SR}$, 
$p_{\vec a}=\Tr{\rho^{\mathcal{E}}_{AR}\proj{\varphi_{\vec a}}}$,
and $\tau^R_{\vec a}=\Ptr{A}{\rho^{\mathcal{E}}_{AR}\proj{\varphi_{\vec a}}}/p_{\vec a}$.
Thus,
\begin{align}
    F_e(\mathcal{E}_K, \mathcal{D}_K, \sigma)
    =\bra{\Phi}\mathcal{D}_K\circ\Lambda_{\sigma}\circ\mathcal{E}_K(\proj{\Phi})\ket{\Phi}
    =\sum_{\vec a}p_{\vec a}\Tr{\mathcal{D}\circ\Lambda_{\sigma}(\proj{\varphi_{\vec a}})_A\ot\tau^R_{\vec a}\proj{\Phi}_{SR}}
    \leq \frac{1}{K},
\end{align}
where the last inequality comes from the fact that
\begin{eqnarray}
    \Tr{\mathcal{D}\circ\Lambda_{\sigma}(\proj{\varphi_{\vec a}})_A\ot\tau^R_{\vec a}\proj{\Phi}_{SR}}
    \leq \max_{\rho_S, \sigma_R}\Tr{\rho_S\ot\sigma_R \proj{\Phi}_{SR}}
    \leq \frac{1}{K},
\end{eqnarray}
as the largest Schmidt coefficient of the maximal entangled state $\ket{\Phi}_{SE}$ is $1/\sqrt{K}$. Hence, for the pure stabilizer state $\sigma$,
\begin{align}
    \max_{\mathcal{E}_K} \max_{\mathcal{D}_K}   F_e(\mathcal{E}_K, \mathcal{D}_K, \sigma)
\leq \frac{1}{K}.
\end{align}

Now, let construct an example with stabilizer environment state $\sigma$, encoding 
$\mathcal{E}_K$ and encoding $\mathcal{D}_K$, such that the entanglement fidelity is 
$\frac{1}{K}$.
Let us consider the $n$-qudit stabilizer state $\sigma=\proj{\vec 0}$, and consider the 
encoding
\begin{align}
    \mathcal{E}: \ket{i}_S\to \ket{\vec x_i}_A,
\end{align}
where $\vec x_i\in\mathbb{Z}^n_d$, and $\vec x_i\neq \vec x_j$ for $i\neq j$.
Hence, we have 
\begin{align}
    \Lambda_{\sigma}(\ket{\vec x_i}\bra{\vec x_j}_A)
    =\Ptr{B}{U_{s,t}\ket{\vec x_i}\bra{\vec x_j}_A\ot \proj{\vec 0}_BU^\dag_{s,t}}
    =\Ptr{B}{\ket{s\vec x_i}\bra{s\vec x_j}_A\ot \ket{-t\vec x_i}\bra{-t\vec x_j}_B}
    =0, \forall i \neq j,
\end{align}
and 
\begin{align}
    \Lambda_{\sigma}(\ket{\vec x_i}\bra{\vec x_i}_A)
    =\proj{s\vec x_i}_A, \forall i\in [K].
\end{align}
Therefore, 
\begin{align}
    \Lambda_{\sigma}\circ\mathcal{E} (\proj{\Phi}_{AR})
    =\frac{1}{K}\sum_{i,j}   \Lambda_{\sigma}(\ket{\vec x_i}\bra{\vec x_j})\ot \ket{i}\bra{j}_R
=\frac{1}{K}\sum_i \Lambda_{\sigma}(\ket{\vec x_i}\bra{\vec x_i})\ot \ket{i}\bra{i}_R.
\end{align}
% which comes from the fact that for any $i\neq j$,
% \begin{align}
%     \Lambda_{\sigma}(\ket{\vec x_i}\bra{\vec x_j})
%     =\Ptr{B}{U_{s,t}\ket{\vec x_i}\bra{\vec x_j}_A\ot \proj{\vec 0}_BU^\dag_{s,t}}
%     =\Ptr{B}{\ket{s\vec x_i}\bra{s\vec x_j}_A\ot \ket{-t\vec x_i}\bra{-t\vec x_j}_B}
%     =0.
% \end{align}
% In addition,
% \begin{align}
%     \Lambda_{\sigma}(\ket{\vec x_i}\bra{\vec x_i})
%     =\proj{s\vec x_i}_A, \forall i,
% \end{align}
Let us take the decoding $\mathcal{D}_K$
as follows
\begin{align}
   \mathcal{D}_K: \ket{s\vec x_i}_A\to \ket{i}_S.
\end{align}
Then, the state after the decoding $\mathcal{D_K}$ will become
\begin{align}
    \mathcal{D} \circ\Lambda_{\sigma}\circ\mathcal{E} (\proj{\Phi}_{AR})
  =&  \frac{1}{K}\sum_i  \mathcal{D} \circ\Lambda_{\sigma}(\ket{\vec x_i}\bra{\vec x_i})\ot \proj{i}_R\\
  =&\frac{1}{K}\sum_i  \mathcal{D}\circ \Lambda_{\sigma}(\proj{\vec x_i})\ot \proj{i}_R\\
  =&\frac{1}{K}\sum_i  \mathcal{D} (\proj{s\vec x_i})\ot \proj{i}_R\\
  =&\frac{1}{K}\sum_i  \proj{i}_S\ot \proj{i}_R.
\end{align}
Therefore
\begin{align}
    F_e(\mathcal{E}_K, \mathcal{D}_K, \sigma)
    =\bra{\Phi}\mathcal{D}_K\circ\Lambda_{\sigma}\circ\mathcal{E}_K(\proj{\Phi})\ket{\Phi}
    =\frac{1}{K}\sum_i  \bra{\Phi}\proj{i}\ot \proj{i}_R\ket{\Phi}
    =\frac{1}{K}.
\end{align}
\end{proof}

For $K=2$, i.e., $\mathcal{H}_S$ is a qubit, 
 let us consider the magic environment state 
\begin{align}\label{exam_1}
   \ket{\sigma}_B=\frac{1}{\sqrt{2}}(\ket{0}_B+\ket{t}_B),
\end{align} 
and the stabilizer encoding
$\mathcal{E}_2$ as follows
\begin{align}\label{def:enc}
    \mathcal{E}_2:  \ket{0}_S\to \ket{0}_A, \quad
    \ket{1}_S\to \ket{s}_A.
\end{align}

\begin{prop}\label{prop:main2}
Given nontrivial $s,t$ with $s^2\not\equiv t^2\mod d$, 
a magic environment state $\sigma$ in \eqref{exam_1},  encoding $\mathcal{E}_2:\mathcal{C}^2\to\mathcal{C}^d$ in \eqref{def:enc}, there exists a  decoding 
$\mathcal{D}:\mathcal{C}^d\to \mathcal{C}^2$ such that 
    \begin{align}
      F_e(\mathcal{E}_2, \mathcal{D}_2, \sigma)
= \frac{3}{4}.
    \end{align}
\end{prop}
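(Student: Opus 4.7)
The plan is to explicitly compute the output state $\rho_{AR}=(I_R\otimes\Lambda_\sigma\circ\mathcal{E}_2)(\proj{\Phi}_{SR})$ and then write down a two-term Kraus decoding that handles the ``no-error'' and ``classical error'' contributions coherently. First I would apply the encoding to obtain $(\mathcal{E}_2\otimes I_R)\ket{\Phi}_{SR}=\frac{1}{\sqrt{2}}(\ket{0}_A\ket{0}_R+\ket{s}_A\ket{1}_R)$, tensor in the environment $\ket{\sigma}_B=\frac{1}{\sqrt{2}}(\ket{0}_B+\ket{t}_B)$, and propagate through the discrete beam splitter using $U_{s,t}\ket{\vec i}_A\ket{\vec j}_B=\ket{s\vec i+t\vec j}_A\ket{-t\vec i+s\vec j}_B$ on $AB$.

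Expanding yields four kets on $AB$: $\ket{0,0}$, $\ket{t^2,st}$, $\ket{s^2,-ts}$, and $\ket{1,0}$, where I used $s^2+t^2\equiv 1 \bmod d$. The three environment labels $0,st,-ts$ are distinct because $s,t\neq 0$ and $d$ is an odd prime, so tracing out $B$ merges only the first and fourth contributions (the two sharing $\ket{0}_B$). Combining the hypothesis $s^2\not\equiv t^2\bmod d$ with the nontriviality of $s,t$, the four kets $\{\ket{0},\ket{1},\ket{t^2},\ket{s^2}\}$ are mutually orthogonal in $\mathcal{H}_A$, giving
\begin{align}
\rho_{AR}=\frac{1}{2}\proj{\Psi^+}_{AR}+\frac{1}{4}\proj{t^2}_A\otimes\proj{0}_R+\frac{1}{4}\proj{s^2}_A\otimes\proj{1}_R,
\end{align}
where $\ket{\Psi^+}_{AR}=\frac{1}{\sqrt{2}}(\ket{0}_A\ket{0}_R+\ket{1}_A\ket{1}_R)$.

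For the decoding I would take $\mathcal{D}_2$ with two leading Kraus operators
\begin{align}
K_1=\ket{0}_S\bra{0}_A+\ket{1}_S\bra{1}_A,\qquad K_2=\ket{0}_S\bra{t^2}_A+\ket{1}_S\bra{s^2}_A,
\end{align}
padded by additional Kraus operators supported on the orthogonal complement of $\mathrm{span}\{\ket{0},\ket{1},\ket{t^2},\ket{s^2}\}\subset\mathcal{H}_A$ to make the channel trace preserving. By construction $K_1$ sends $\ket{\Psi^+}_{AR}$ to the target $\ket{\Phi}_{SR}$ and annihilates both error terms, while $K_2$ annihilates $\ket{\Psi^+}_{AR}$ and relabels the error terms into $\ket{00}_{SR}$ and $\ket{11}_{SR}$; the completion operators contribute nothing because $\rho_{AR}$ is supported on $\mathrm{span}\{\ket{0},\ket{1},\ket{t^2},\ket{s^2}\}\otimes\mathcal{H}_R$.

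The final computation is then
\begin{align}
F_e(\mathcal{E}_2,\mathcal{D}_2,\sigma)=\bra{\Phi}\left(\frac{1}{2}\proj{\Phi}+\frac{1}{4}\proj{00}+\frac{1}{4}\proj{11}\right)\ket{\Phi}=\frac{1}{2}+\frac{1}{8}+\frac{1}{8}=\frac{3}{4}.
\end{align}
The only non-routine step is verifying the distinctness claims on $\{0,st,-ts\}$ in $B$ and on $\{0,1,t^2,s^2\}$ in $A$, both of which reduce to the nontriviality of $s,t$ and the assumption $s^2\not\equiv t^2\bmod d$. The conceptual point, and what makes $F_e$ exceed the stabilizer bound $1/K=1/2$ of Proposition~\ref{prop:main1}, is that the error output states are orthogonal to the codeword output states, so a diagonal Kraus decoder can recover additional fidelity from the errors without destroying the coherent contribution of the entangled piece $\proj{\Psi^+}_{AR}$.
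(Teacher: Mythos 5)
Your proposal is correct and follows essentially the same route as the paper: propagate the encoded maximally entangled state through $U_{s,t}$ with the environment $\frac{1}{\sqrt{2}}(\ket{0}_B+\ket{t}_B)$, observe that only the two terms sharing the environment label $\ket{0}_B$ interfere after tracing out $B$, and decode by relabeling the four mutually orthogonal $A$-kets back onto $\ket{0}_S,\ket{1}_S$. Your version is in fact slightly more careful than the paper's, since you make the non-isometric decoder explicit as a Kraus-operator channel and verify all the distinctness conditions, whereas the paper's displayed intermediate kets carry an immaterial index shift.
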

\begin{proof}
Let us consider the 
maximal entangled state 
\begin{align}
    \ket{\Phi}_{SR}
=\frac{1}{\sqrt{2}}(\ket{0}_S\ket{0}_R+\ket{1}_S\ket{1}_R).
\end{align}
After the encoding in \eqref{def:enc}, the maximal entangled state  will become
    \begin{eqnarray}
\ket{\Phi_{\mathcal{E}}}_{AR}
=\frac{1}{\sqrt{2}}(\ket{0}_A\ket{0}_R+\ket{s}_A\ket{1}_R),
\end{eqnarray}
where before encoding $\ket{\Phi}_{SR}=\frac{1}{\sqrt{2}}(\ket{0}_S\ket{0}_R+\ket{1}_S\ket{1}_R)$.
Thus 
\begin{align}
\ket{\tau}_{ABR}
=&U^{AB}_{s,t}\ket{\Psi_{\mathcal{E}}}_{AR}
\ket{\phi}_B
=\frac{1}{2}
\sum^{2}_{k=1}
\ket{kt^2}_A\ket{kst}_B\ket{0}_E
+\frac{1}{2}
\sum^{2}_{k=1}
\ket{s^2+kt^2}_A
\ket{(k-1)st}_B
\ket{1}_R\\
=&\frac{1}{2}
\sum^{2}_{k=1}
\ket{kt^2}_A\ket{kst}_B\ket{0}_R
+\frac{1}{2}
\sum^{2}_{k=1}
\ket{(k-1)t^2+1}_A
\ket{(k-1)st}_B
\ket{1}_R.
\end{align} 
Hence, 
\begin{eqnarray}
    \Lambda_{\sigma}\circ \mathcal{E}_2(\proj{\Phi}_{SR})
    =\frac{1}{2}\proj{\mu}_{AR}+\frac{1}{4}\proj{2t^2}_A\ot\proj{0}_E
    +\frac{1}{4}\proj{0}_A\ot\proj{1}_R,
\end{eqnarray}
where 
\begin{align}
    \ket{\mu}_{AR}
    =\frac{1}{\sqrt{2}}(\ket{t^2}_A\ket{0}_R+\ket{t^2+1}_A\ket{1}_R).
\end{align}
Since $t^2\neq 0,1\mod d$, the quantum states $\ket{0}_A,\ket{t^2},\ket{2t^2}, \ket{t^2+1}$ are orthogonal to each other.
Hence, we consider the decoding $\mathcal{D}_2$ 
which will map $\ket{kt^2}_A$ to $\ket{0}_S$ for all $k\in \set{0,1,2}$, and 
 $\ket{t^2+1}_A$ to $\ket{1}_S$. 
After the decoding $\mathcal{D}_2$, the state will become 
\begin{align}
 \mathcal{D}_2\circ\Lambda_{\sigma}\circ \mathcal{E}_2(\proj{\Phi}_{SR})
 =\frac{1}{2}\proj{\Phi}_{SR}
 +\frac{1}{4}\proj{0}_S\ot\proj{0}_R
    +\frac{1}{4}\proj{1}_S\ot\proj{1}_R,
\end{align}
and thus 
\begin{align}
  F_e(\mathcal{E}_2, \mathcal{D}_2, \sigma)
  =\bra{\Phi}\mathcal{D}_2\circ\Lambda_{\sigma}\circ \mathcal{E}_2(\proj{\Phi}_{SR})\ket{\Phi}
=\frac{1}{2}
+\frac{1}{4}
=\frac{3}{4}.
\end{align}

\end{proof}
Hence, for $K=2$, by the Propositions \ref{prop:main1} and 
\ref{prop:main2}, we find that the entanglement fidelity 
with magic environmental state \eqref{exam_1} is larger than that with any stabilizer states,
\begin{align*}
    F_e(\mathcal{E}_2, \mathcal{D}_2, \sigma_B)=\frac{3}{4}>\frac{1}{2}=\max_{\tau\in STAB}\max_{\mathcal{E}_2} \max_{\mathcal{D}_2}   F_e(\mathcal{E}_2, \mathcal{D}_2, \tau).
\end{align*}

\begin{prop}
    The advantage of  magic state on the performance of 
entanglement fidelity compared to stabilizer states  is bounded by the magic amount as follows
\begin{align}
    \frac{\max_{\mathcal{E}_K} \max_{\mathcal{D}_K}   F_e(\mathcal{E}_K, \mathcal{D}_K, \sigma)}{\max_{\tau\in STAB}\max_{\mathcal{E}_K} \max_{\mathcal{D}_K}   F_e(\mathcal{E}_K, \mathcal{D}_K, \tau)}
    \leq 2^{MRM_{\infty}(\sigma)}.
\end{align}
\end{prop}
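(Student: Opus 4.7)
The plan is to bound the numerator by $2^{MRM_{\infty}(\sigma)}/K$ via a direct operator-inequality argument, and then divide by the denominator, which equals exactly $1/K$ thanks to Proposition~\ref{prop:main1}. Everything reduces to monotonicity of the channel $\sigma \mapsto \Lambda_\sigma$ in the positive-semidefinite order of the environmental state.

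First, by the definition $MRM_{\infty}(\sigma) = \min_{\tau \in STAB} D_\infty(\sigma\|\tau)$ together with the definition of $D_\infty$, I can choose an optimal stabilizer state $\tau^* \in STAB$ such that $\sigma \leq 2^{MRM_{\infty}(\sigma)}\, \tau^*$ as positive operators. Next, for any positive operator $X$ on $\mathcal{H}_R \otimes \mathcal{H}_A$, the tensor-product inequality $X \otimes \sigma \leq 2^{MRM_{\infty}(\sigma)}(X \otimes \tau^*)$ holds, since the difference is $X \otimes (2^{MRM_{\infty}(\sigma)}\tau^* - \sigma)$, a tensor product of two positive operators. The map $Y \mapsto \Ptr{B}{U_{s,t}\, Y\, U_{s,t}^\dag}$ is completely positive, so applying $\mathrm{id}_R \otimes \Lambda_{(\cdot)}$ preserves the order:
\begin{align*}
(\mathrm{id}_R \otimes \Lambda_\sigma)(X) \;\leq\; 2^{MRM_{\infty}(\sigma)}\, (\mathrm{id}_R \otimes \Lambda_{\tau^*})(X).
\end{align*}

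Specializing to $X = (\mathrm{id}_R \otimes \mathcal{E}_K)(\proj{\Phi})$ and then applying the CP map $\mathrm{id}_R \otimes \mathcal{D}_K$ preserves the same inequality. Sandwiching with $\bra{\Phi}\cdot\ket{\Phi}$ therefore gives, for every choice of encoder and decoder,
\begin{align*}
F_e(\mathcal{E}_K, \mathcal{D}_K, \sigma) \;\leq\; 2^{MRM_{\infty}(\sigma)}\, F_e(\mathcal{E}_K, \mathcal{D}_K, \tau^*).
\end{align*}
Maximizing both sides over $\mathcal{E}_K, \mathcal{D}_K$ and using the state-by-state bound $\max_{\mathcal{E}_K, \mathcal{D}_K} F_e(\mathcal{E}_K, \mathcal{D}_K, \tau^*) \leq 1/K$ for the stabilizer state $\tau^*$ (which is exactly what the argument of Proposition~\ref{prop:main1} establishes per stabilizer state, before taking the maximum over STAB) yields $\max_{\mathcal{E}_K,\mathcal{D}_K} F_e(\mathcal{E}_K,\mathcal{D}_K,\sigma) \leq 2^{MRM_{\infty}(\sigma)}/K$. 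Dividing by the denominator $1/K$ produces the desired inequality.

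The argument is essentially a chain of operator inequalities, so no individual step is technically hard; the main subtlety worth flagging in the write-up is to confirm that the Proposition~\ref{prop:main1} bound $1/K$ is indeed valid for every stabilizer state (not merely as a supremum over STAB), which is what allows the substitution of $\tau^*$ inside the maximum. Everything else follows from the positivity-preserving nature of tensor products and completely positive maps together with the operator-order characterization of $D_\infty$.
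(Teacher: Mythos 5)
Your proposal is correct and follows essentially the same route as the paper: both proofs extract an optimal stabilizer state $\tau^*$ with $\sigma \leq 2^{MRM_{\infty}(\sigma)}\tau^*$ from the operator-order definition of $D_{\infty}$, propagate this inequality through the (completely positive, linear-in-$\sigma$) fidelity functional to get $F_e(\mathcal{E}_K,\mathcal{D}_K,\sigma)\leq 2^{MRM_{\infty}(\sigma)}F_e(\mathcal{E}_K,\mathcal{D}_K,\tau^*)$, and then maximize over encoders and decoders. The only cosmetic difference is that you route the final step through the explicit value $1/K$ from Proposition~\ref{prop:main1}, whereas the paper simply bounds $F_e(\mathcal{E}_K,\mathcal{D}_K,\tau^*)$ by the maximum over STAB before taking the outer maxima, which makes the explicit value unnecessary.
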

\begin{proof}
Based on  definition of magic measure 
$MRM_{\infty}(\sigma)=\min_{\sigma\in STAB}D_{\infty}(\sigma||\tau)$, there exists some 
stabilizer state $\tau$ such that 
\begin{align}
    \sigma \leq 2^{MRM_{\infty}(\sigma)}\tau.
\end{align}
Then, for encoding $\mathcal{E}_K$ and decoding $\mathcal{D}_K$, 
\begin{align}
    F_e(\mathcal{E}_K, \mathcal{D}_K, \sigma)
    \leq 2^{MRM_{\infty}(\sigma)} F_e(\mathcal{E}_K, \mathcal{D}_K, \tau)
    \leq 2^{MRM_{\infty}(\sigma)} \max_{\tau\in STAB}F_e(\mathcal{E}_K, \mathcal{D}_K, \tau).
\end{align}
Hence, we get the result by taking the maximum over all encoding $\mathcal{E}_K$ and decoding $\mathcal{D}_K$ on both sides
of the above inequality.

\end{proof}

\end{appendix}
\end{document}